\newtheorem{proposition}{Proposition}
\theoremstyle{definition}
\newtheorem{example}{Example}
\newtheorem{definition}{Definition}
\newcommand{\blue}[1]{\textcolor{blue}{#1}}
\newcommand{\red}[1]{\textcolor{red}{#1}}
\newcommand{\ip}[2]{\left\langle\,#1\,|\,#2\,\right\rangle} 
\newcommand{\innerp}[2]{\left\langle\,#1\,,\,#2\,\right\rangle} 
\newcommand{\kb}[2]{|#1\rangle\langle#2|} 
\newcommand{\no}[1]{\left\|#1\right\|} 
\newcommand{\tr}[1]{{\rm tr}\left[#1\right]} 
\newcommand{\id}{\mathbbm{1}} 
\renewcommand{\rho}{\varrho}
\newcommand{\abs}[1]{\lvert #1 \rvert} 
\newcommand{\Mo}{\mathsf{M}}
\newcommand{\psuc}{P_{\mathrm{success}}} 
\newcommand{\pfail}{P_{\mathrm{fail}}} 
\newcommand{\fram}{\left\{ f_i \right\}_{i=1}^n}
\begin{document}\setlength{\arraycolsep}{2pt}

\title[]{Partial ignorance communication tasks in quantum theory}

\author{Oskari Kerppo}
\address{Department of Physics and Astronomy, University of Turku, FI-20014 Turku, Finland}

\begin{abstract}

We introduce a generalization of communication of partial ignorance where both parties of a prepare-and-measure setup receive inputs from a third party before a success metric is maximized over the measurements and preparations. Various methods are used to obtain bounds on the success metrics, including SDPs, ultraweak monotones for communication matrices and frame theory for quantum states. Simplest scenarios in the new generalized prepare-and-measure setting, simply called partial ignorance communication tasks, are analysed exhaustively for bits and qudits. Finally, the new generalized setting allows the introduction of operational equivalences to the preparations and measurements, allowing us to analyse and observe a contextual advantage for quantum theory in one of the communication tasks.

\end{abstract}

\maketitle
 
\tableofcontents

\section{Introduction}

In the usual communication setting one party is interested in sending an encoded message to another party. The receiving party will try to decode the message so that information can be transmitted between the parties. In quantum mechanics these parties are typically called Alice and Bob. The limiting factor of this communication is usually the communication medium, which could be a classical bit or the quantum analogue, a qubit, for instance. If Bob successfully decodes the message sent by Alice, the communication between them is deemed successful.

Perhaps the most basic scenario is the following: Alice has a preparation device that can prepare $n$ distinct states of the communication medium. Hence each of the states encodes a unique message. Bob, on the other hand, has a measurement device with $n$ distinct outcomes. If each outcome of Bob's device identifies with certainty the state prepared by Alice, the states are called \textit{distinguishable}. The Basic Decoding Theorem \cite{ScWe2010} states that, in the quantum case, whenever there are more possible messages than distinguishable states the error in this kind of communication is at least $1-\frac{d}{n}$, where $d$ is the respective Hilbert space dimension of the quantum state, and $n$ is the number of states in Alice's preparation device.

In a recent work \cite{HeKe2019}, a variation of the basic communication scenario was studied. In this variant there is a third party, Charlie, who acts as a game master to the following communication game. Charlie has $n$ empty boxes, and he hides a prize inside one of them. He reveals at least one empty box to Alice, who must then communicate this information to Bob. Alice and Bob win the game if Bob chooses the box with the prize in it. The communication between Alice and Bob is successful if Bob has the same chance of finding the prize as Alice would. Hence Alice's encoded message should contain information on which choices Bob should avoid, as even Alice doesn't know where the prize is. This communication of choices to be avoided was called \textit{communication of partial ignorance} by the authors. The qubit case was perfectly characterized, and some general results concerning qudits were presented in \cite{HeKe2019}. The main takeaway was that, perhaps remarkably, perfect communication of partial ignorance requires an entirely different setup than the basic communication scenario, where Bob tries to identify the state sent by Alice directly.

In the present article, we continue the work done in \cite{HeKe2019} by modifying the communication game in a major way. In this new game it's not only Alice who is revealed information about the location of the prize, but also Bob. Charlie could, for instance, reveal one empty box to Alice and one empty box to Bob. This seemingly simple modification has dramatic effects on the strategies that Alice and Bob must employ. Moreover, the difficulty of finding viable communication strategies and proving their optimality is significantly increased. Typically Bob will have multiple choices for his measurement based on the information that is revealed to him. Thus it will become possible to analyse the effects of contextuality as a resource to this game. In general, entirely different mathematical tools are required to analyse this new scenario compared to the communication of partial ignorance.

The rest of the article is structured as follows. In Sec. \ref{comm-tasks} the new communication game is presented in more detail. We then go on to present the required mathematical methods in Sec. \ref{methods}. A brief overview to semidefinite programming, contextuality and frame theory is given. Using the introduced mathematical methods we will analyse the communication game in Sec. \ref{bounds}. Numerical and analytical bounds on successful communication will be presented in the classical, contextual and general quantum case. Finally in Sec. \ref{final-thoughts} we end the present article with some final thoughts and discussion on possible future directions.

\section{Partial ignorance communication tasks}\label{comm-tasks}

Let's consider a general communication setting first. Suppose Charlie has an $n$-bit string, $s$. The objective of the communication task is that Bob must guess the value of at least one bit in $s$. Alice will receive a bit string, or input, of her own, $a$. The bit string $a$ can be of any fixed but finite length, and in general it will contain some information on the bit string $s$. Alice then sends an encoded message, $r(a)$, to Bob. In the classical case the message will be a bit string of fixed length. A qudit $\rho_a$ of fixed dimension $d$ will be sent in the quantum case. Upon receiving the message $r(a)$ sent by Alice, Bob will also receive a bit string $b$ of fixed length from Charlie. Once communication is done Bob must produce a guess on the values of at least some bits in $s$.

An example of the above communication setting would be a random access code (RAC) \cite{Wiesner1983, AmNaTaVa2002, AmLeMaOz2009}. In a RAC Alice's input $a$ will coincide with Charlie's bit string $s$. Typically Alice is only allowed to send one bit or a qubit to Bob. The input $b$ given to Bob will contain the index of the bit whose value Bob must guess. If we denote by $g(b, r(a))$ the guess Bob produces based on his input $b$ and the message $r(a)$, Alice and Bob win the game if $g(b, r(a)) = s_b$, where $s_b$ denotes the value of the bit string $s$ at index $b$. 

Communication of partial ignorance, with the notation we have been using, would be presented in the following way. Charlie has an $n$-bit string $s$ containing exactly one 1, the index of which will indicate the location of the prize. Alice's input $a$ will reveal the index of at least one of the 0's. Bob doesn't receive any input except for the message sent by Alice. Alice and Bob win if Bob's guess $g(r(a))$ equals the index of the 1 in $s$.

\begin{figure}
    \centering
    \includegraphics[width=\textwidth]{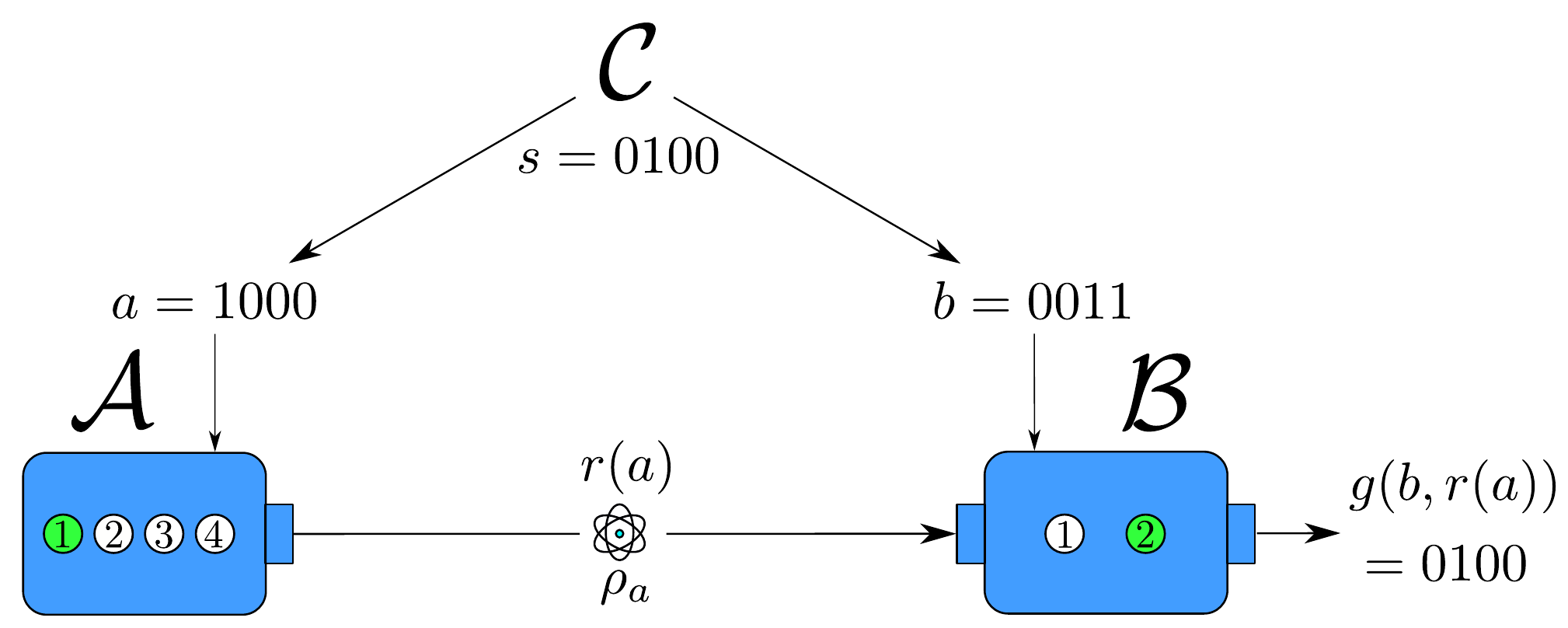}
    \caption{An exemplary setting of a partial ignorance communication task of type $T_{4,1}$. The 1 in $a$ reveals the first 0 in $s$. The two 1's in $b$ reveal the last two 0's in $s$. Bob should guess the correct index of the 1 in $s$.}
    \label{fig:commsetting}
\end{figure}

It's very important to note that, in communication of partial ignorance, the game master Charlie is not obliged to sample the bit-string $s$ according to any fixed distribution. In fact, he is allowed to freely choose it even after Alice has already received her input. However, he must not give conflicting information to Alice. Thus perfect communication of partial ignorance requires Alice and Bob to minimize the worst case error probability with respect to all possible inputs. The strategy that Alice and Bob must implement leads to the concept of communication matrices, which were extensively studied in \cite{HeKeLe2020} along with their operational hierarchy.

We can now define the generalization of communication of partial ignorance we are interested in.

\begin{definition}\label{def:singleshotcommtask}
A partial ignorance communication task of type $T_{n,m}$ is a communication game where a game master Charlie chooses an $n$-bit string $s$ with exactly one 1. Charlie then sends an input $a$ revealing the indices of $m$ 0's in $s$ to Alice. Bob will be revealed the remaining indices of 0's in input $b$. Both inputs $a$ and $b$ are bit strings of same length as $s$, and a value 1 in both inputs will reveal a 0 in $s$ at the corresponding index. Alice and Bob are allowed to communicate according to predefined rules. After communication Bob must produce a guess on the index of the 1 in $s$. Alice and Bob win if Bob's guess is correct.
\end{definition}

An illustration of the partial ignorance communication task of type $T_{4,1}$ is given in Fig. \ref{fig:commsetting}.

The limiting factor in partial ignorance communication tasks is the communication medium. In the classical version Alice is allowed to send a classical message $r(a)$ containing a predefined number of bits to Bob. In the cases studied in this article the number of bits will be just one. In the quantum version a qudit $\rho_a$ of fixed dimension can be sent.

It should be noted that any type of partial ignorance communication task will be a game of complete information in the sense that Alice and Bob's inputs together will reveal the index of the 1. We could also study the case where this is not the case, but for now we will only study the informationally complete version. For the remainder of this article, whenever we are talking about a task $T_{n,m}$, we will always be referring to the partial ignorance communication task of type $T_{n,m}$ as defined in Def. \ref{def:singleshotcommtask}.

Let's consider the simplest task $T_{3,1}$ as an example before moving on to the mathematical methods.

\subsection{Classical and quantum strategies for \texorpdfstring{$T_{3,1}$}{T3,1}}

Let's begin by introducing the best possible classical strategy for the task $T_{3,1}$.

\begin{example}\textit{(Classical strategy for $T_{3,1}$)}
There are three different choices for the string $s$, namely $100$, $010$ and $001$. These strings also coincide with the possible inputs $a$ and $b$ for Alice and Bob. Tab. \ref{tab:comm-task-c31-strategy} presents a general form of the strategy that Alice and Bob will implement. A specific strategy is obtained by replacing the variables $i$, $j$ and $k$ in Tab. \ref{tab:comm-task-c31-strategy} by values 0 or 1, and by listing the values of the guesses $g(b, r(a))$ in the middle table.

By looking at the rows of Tab. \ref{tab:comm-task-c31-strategy} where the input $b$ is identical, it becomes apparent that in order to never make a mistake the variables $i$, $j$ and $k$ should fulfil the following restrictions: \begin{align*}
    j \neq i, \quad k \neq i, \quad k \neq j.
\end{align*} Otherwise Bob is forced to produce the same guess with different strings $s$ because his inputs are identical. However, the variables $i$, $j$ and $k$ can only take on the binary values of 0 or 1, and hence at least one of the restrictions must be violated. This means that any classical strategy, formed by giving concrete values to Tab. \ref{tab:comm-task-c31-strategy}, must contain at least one mistake. The average classical success probability in task $T_{3,1}$ is therefore at most $5/6$. Tab. \ref{tab:c31-optimal-strategy} shows an example of an optimal classical strategy that saturates the upper bound for success probability. Therefore the classical average success probability for $T_{3,1}$ is exactly $5/6$.

\begin{table}
\centering
    \begin{tabular}{c|c}
        $a$ & $r(a)$ \\ \hline
        $100$ & $i$  \\
        $010$ & $j$  \\
        $001$ & $k$
    \end{tabular}
    \begin{tabular}{c|c|c}
        $b$ & $g(b, 0)$ & $g(b,1)$ \\ \hline
        $100$ & $l$ & $x$ \\
        $010$ & $m$ & $y$ \\
        $001$ & $n$ & $z$
    \end{tabular}
    \begin{tabular}{c|c|c|c|c}
        $s$ & $a$ & $b$ & $r(a)$ & $g(b, r(a))$ \\ \hline
        $100$ & $010$ & $001$ & $j$ & $g(001, j)$ \\ 
        $100$ & $001$ & $010$ & $k$ & $g(010, k)$ \\
        $010$ & $100$ & $001$ & $i$ & $g(001, i)$ \\
        $010$ & $001$ & $100$ & $k$ & $g(100, k)$ \\
        $001$ & $100$ & $010$ & $i$ & $g(010, i)$ \\
        $001$ & $010$ & $100$ & $j$ & $g(100, j)$ 
\end{tabular}\newline
\caption{General classical strategy for the task $T_{3,1}$.}
\label{tab:comm-task-c31-strategy}
\end{table}

\begin{table}
\centering
    \begin{tabular}{c|c}
        $a$ & $r(a)$ \\ \hline
        $100$ & $1$  \\
        $010$ & $1$  \\
        $001$ & $0$
    \end{tabular}
    \begin{tabular}{c|c|c}
        $b$ & $g(b,0)$ & $g(b,1)$ \\ \hline
        $100$ & $2$ & $3$ \\
        $010$ & $1$ & $3$ \\
        $001$ & $-$ & $1$
    \end{tabular}
    \begin{tabular}{c|c|c|c|c}
        $s$ & $a$ & $b$ & $r(a)$ & $g(b, r(a)$ \\ \hline
        $100$ & $010$ & $001$ & $1$ & $1$ \\ 
        $100$ & $001$ & $010$ & $0$ & $1$ \\
        $010$ & $100$ & $001$ & $1$ & \red{1} \\
        $010$ & $001$ & $100$ & $0$ & $2$ \\
        $001$ & $100$ & $010$ & $1$ & $3$ \\
        $001$ & $010$ & $100$ & $1$ & $3$ 
\end{tabular}\newline
\caption{An optimal classical strategy for the task $T_{3,1}$. The erroneous value $g(001, 1)=1$ is shown in red. The guess $g(001, 0)$ is not listed in the middle table because it is not needed in this implementation of the strategy.}
\label{tab:c31-optimal-strategy}
\end{table}

\end{example}

Note that the worst case success probability for all classical strategies for the task $T_{3,1}$ is always zero. Interestingly, the strategy listed in  Tab. \ref{tab:c31-optimal-strategy} has unused capacity for communication, because the guess $g(001, 0)$ never occurs. Perfect communication is nevertheless not possible.

Before moving on to the next section, let us introduce the optimal quantum strategy for $T_{3,1}$.

\begin{example}{\textit{(Quantum strategy for $T_{3,1}$)}}
No matter which string $s$ Charlie chooses, Bob's measurement will try to distinguish between a pair of states. For instance, upon receiving input $001$ from Charlie, Bob's measurement should try to distinguish Alice's states $\rho_1$ and $\rho_2$ because Bob knows Alice is going to prepare one of those states. Following this logic, Alice's preparation device should contain states that can be pairwise distinguished as well as possible. With qubits these states can be chosen from a Bloch sphere plane with equal angles between the state vectors.

We can choose Alice's first state to correspond to the Bloch vector $\begin{bmatrix} 0 & 0 & 1 \end{bmatrix}$. The other two states are obtained by 120 degree clockwise rotations in the $xz$-plane. As density matrices these states can be written as:

\begin{align}
    \rho_1 = \begin{pmatrix} 1 & 0 \\ 0 & 0 \end{pmatrix}, \quad \rho_2 = \frac 14 \begin{pmatrix} 1 & \sqrt 3 \\ \sqrt 3 & 3 \end{pmatrix}, \quad \rho_3 = \frac 14 \begin{pmatrix} 1 & -\sqrt 3 \\ - \sqrt 3 & 3 \end{pmatrix}  
\end{align}

Bob's measurements that best distinguish Alice's states in a pairwise manner, written in terms of POVMs, are the following: \begin{align}\begin{split}
    \Mo_1(1)=\frac12 \begin{pmatrix}1&-1\\-1&1 \end{pmatrix},&\quad \Mo_1(2)=\frac12 \begin{pmatrix}1&1\\1&1 \end{pmatrix}\\
    \Mo_2(1)=\frac12 \begin{pmatrix}1-\frac{\sqrt3}{2}&-\frac12\\-\frac12&1+\frac{\sqrt3}{2} \end{pmatrix},&\quad \Mo_2(2)=\frac12 \begin{pmatrix}1+\frac{\sqrt3}{2}&\frac12\\\frac12&1-\frac{\sqrt3}{2} \end{pmatrix}\\
    \Mo_3(1)=\frac12 \begin{pmatrix}1-\frac{\sqrt3}{2}&\frac12\\\frac12&1+\frac{\sqrt3}{2} \end{pmatrix},&\quad \Mo_3(2)=\frac12 \begin{pmatrix}1+\frac{\sqrt3}{2}&-\frac12\\-\frac12&1-\frac{\sqrt3}{2} \end{pmatrix}.    
\end{split}\end{align} Each POVM is obtained by a rotation of $\pm 90$ degrees in the $xz$-plane from the corresponding state vector, e.g., the effects of POVM $M_2$ are obtained by rotating the Bloch vector of $\rho_2$. For the success probabilities we can calculate that, for instance, $\tr{\rho_3 \Mo_1(1)}=\frac12 \left(1 + \frac{\sqrt3}{2}\right)\approx 0.933013$ and $\tr{\rho_3 \Mo_1(2)} =\frac12 \left(1 - \frac{\sqrt3}{2}\right)\approx 0.0669873$. Because the states and POVMs were chosen in a symmetrical manner we can conclude that the average success probability is equal to $\frac12 \left(1 + \frac{\sqrt3}{2}\right)\approx 0.933013$ while the worst case error probability is equal to $\frac12 \left(1 - \frac{\sqrt3}{2}\right)\approx 0.0669873$.

\begin{figure}
     \centering
     \begin{subfigure}[b]{0.45\textwidth}
         \centering
         \includegraphics[width=\textwidth]{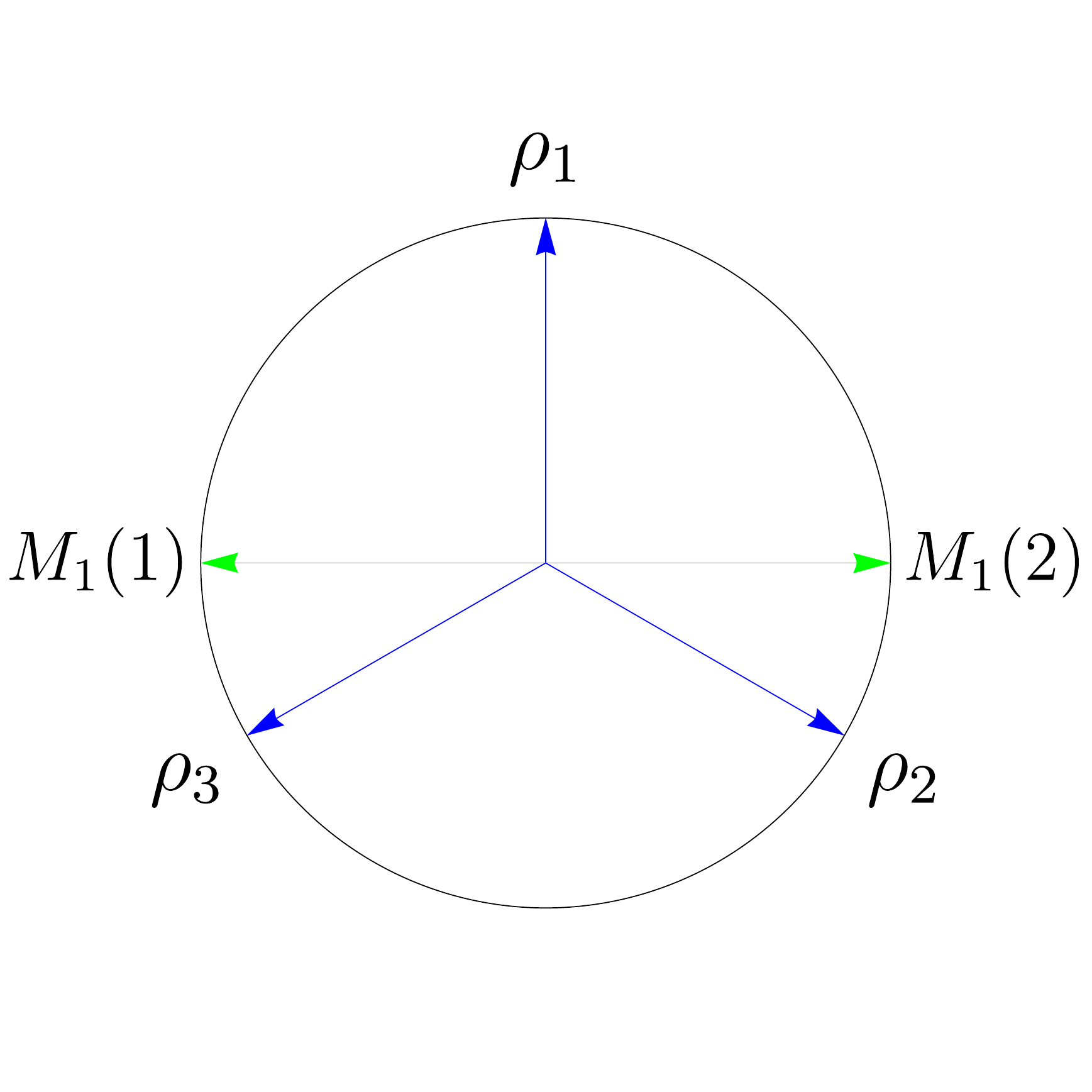}
         \caption{Measurement for $b=100$.}
         \label{fig:mes-c31m1}
     \end{subfigure}
     \hfill
     \begin{subfigure}[b]{0.45\textwidth}
         \centering
         \includegraphics[width=\textwidth]{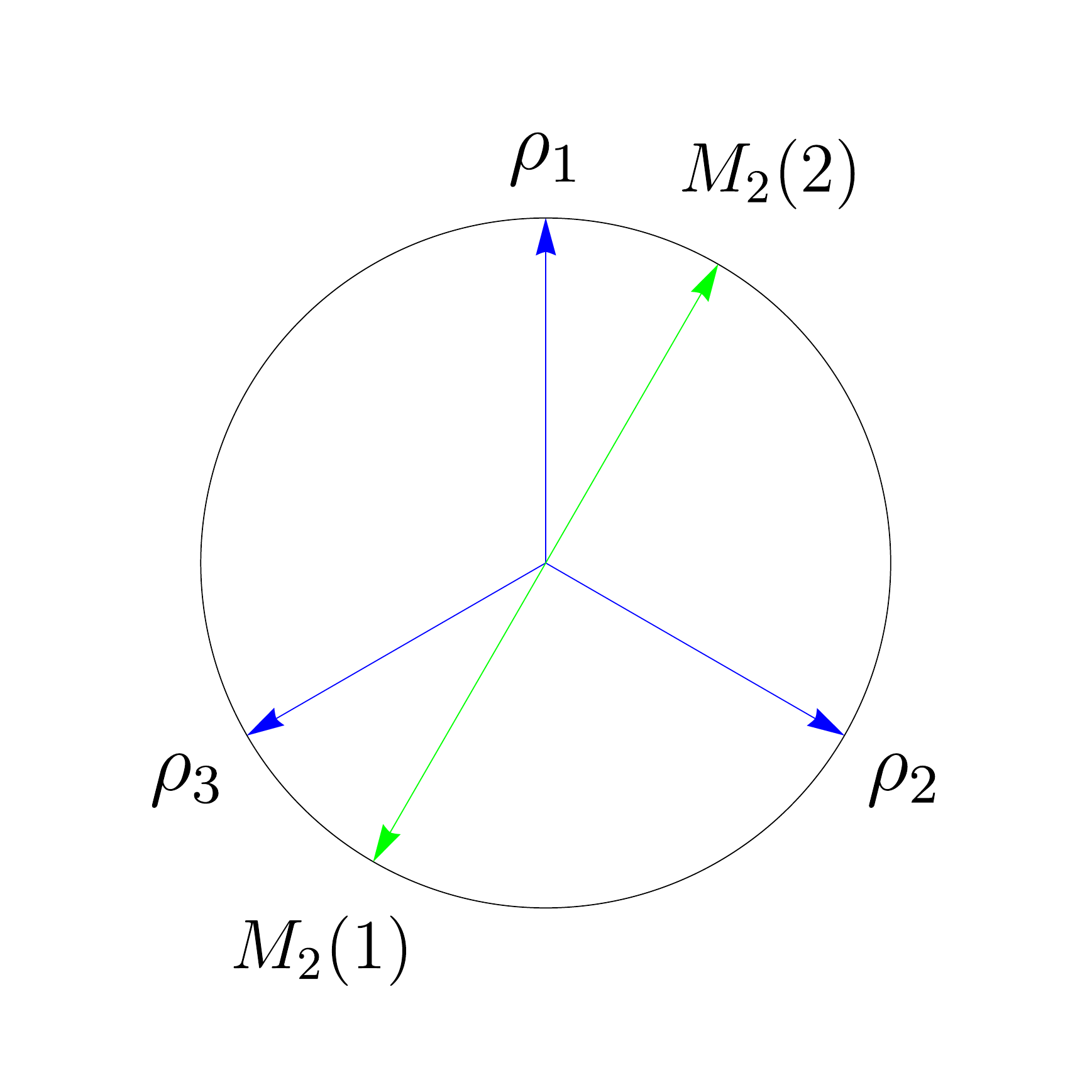}
         \caption{Measurement for $b=010$.}
         \label{fig:mes-c31m2}
     \end{subfigure}
     \hfill
     \begin{subfigure}[b]{0.45\textwidth}
         \centering
         \includegraphics[width=\textwidth]{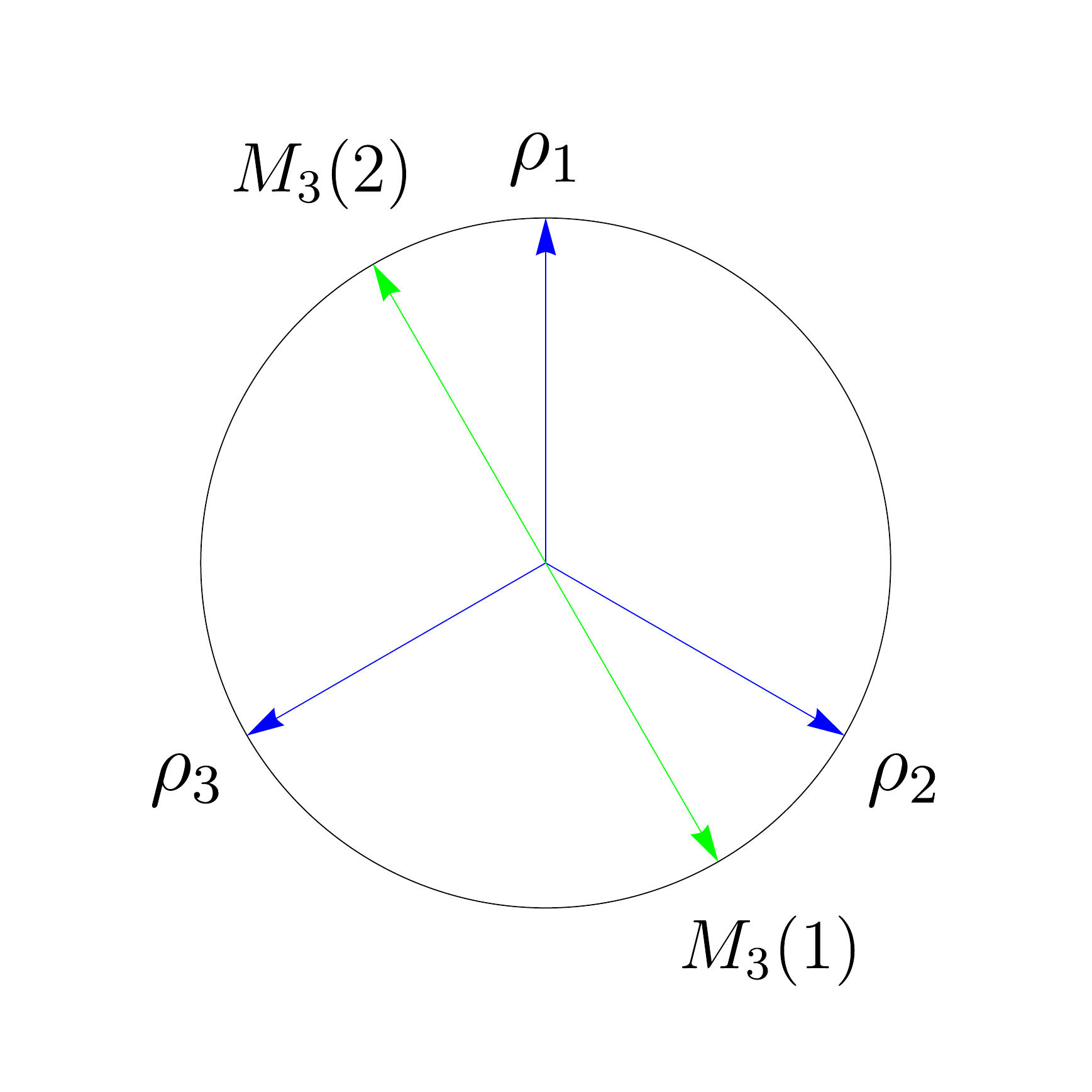}
         \caption{Measurement for $b=001$.}
         \label{fig:mes-c31m3}
     \end{subfigure}
        \caption{Alice's states and Bob's measurements for the task $T_{3,1}$, presented in the $xz$-plane of the Bloch sphere.}
        \label{fig:c31-bobs-measurements}
\end{figure}
\end{example}

As we can see, there is a drastic difference in average and worst case success probabilities between the bit and the qubit in the task $T_{3,1}$. This case was, however, quite easy to analyse. In order to analyse any task $T_{n,m}$ with $n > 3$ we are going to need some mathematical tools.

\section{Mathematical Methods}\label{methods}

In this section we will introduce three tools that can be used to analyse the communication tasks presented in the previous section. Semidefinite programming is a powerful numerical tool that can be used to obtain numerical bounds on success probabilities in various communication and computation tasks involving quantum resources. Contextuality is one such resource and we will use it to provide an alternative comparison between classical and quantum strategies. While semidefinite programs (SDPs) provide valuable insight to many problems, it turns out we can also explain some bounds obtained by SDPs analytically with the help of frame theory. To this end we will use Grassmannian frames.

\subsection{Semidefinite programming}

There are many excellent sources on the theory of SDPs \cite{NaPiAc2007, NaPiAc2008, TaCrUoAb2020, Wi2015, Pi2018, TaZaWoPi2022, NaVe2015, NaFeArVe2015, WaPrLaVaLi2019}. We will be using the unitary SDP hierarchy introduced in \cite{ChFaWr2020} largely due to the fact that it gives very good results already on the first level.

Consider the following optimization problem: \begin{align}\label{eq:optimization}
\begin{split}
\max & \sum_{i,j,k} p_{ijk} \, \tr{\rho_i \Mo_j(k)} \\
s.t. & \quad \rho_i \geq 0 \, \forall i \\
 & \tr{\rho_i} = 1 \, \forall i \\
 & \sum_i \left(  \alpha_i^r - \beta_i^r \right) \rho_i = 0 \, \forall r \in \mathcal{O}_P \\
 & 0 \leq \Mo_j(k) \leq 1 \, \forall j,k \\
 & \sum_k \Mo_j(k) = \id \, \forall j \\
 & \sum_{i,j} \left(  \alpha_{i,j}^s - \beta_{i,j}^s \right)\Mo_i(j) = 0 \, \forall s \in \mathcal{O}_\Mo
\end{split}
\end{align}where the weights $p_{ijk}$ define a success metric on the behavior $p(k|\rho_i , \Mo_j)\equiv \tr{\rho_i \Mo_j(k)}$ and $\mathcal{O}_P$ (indexed by $r$) contains restrictions of the form $\sum_i \alpha_i^r \rho_i = \sum_j \beta_j^r \rho_j$ on  the preparations for some convex weights $\alpha_i^r$ and $\beta_i^r$. Similarly $\mathcal{O}_\Mo$ (indexed by $s$) contains restrictions on the effects for some convex weights $\alpha_{i,j}^s$ and $\beta_{i,j}^s$. All optimal quantum strategies for communication tasks $T_{n,m}$ can be obtained by solving an optimization problem of the form \eqref{eq:optimization}. Notice that the sets $\mathcal{O}_P$ and $\mathcal{O}_\Mo$ may be empty. Unfortunately, this problem is not an SDP. We can, however, use the following algorithm to give the success metric $\sum_{i,j,k} p_{ijk} \, \tr{\rho_i \Mo_j(k)}$ a dimension based lower bound.

\begin{definition}
See-saw algorithm. \begin{itemize}
    \item[1.] Fix a dimension $d$ for the Hilbert space of the quantum states and POVMs.
    \item[2.] Generate random rank-1 states and fix the states $\rho_i$ as constants.
    \item[3.] Optimize Eq. \eqref{eq:optimization} as an SDP over POVM elements $\Mo_j(k)$. Calculate $x_1=\sum_{i,j,k} p_{ijk} \, \tr{\rho_i \Mo_j(k)}$.
    \item[4.] Fix POVM elements $\Mo_j(k)$ as constants with values obtained in the previous step. Optimize Eq. \eqref{eq:optimization} as an SDP over the states $\rho_i$. Calculate $x_2=\sum_{i,j,k} p_{ijk} \, \tr{\rho_i \Mo_j(k)}$.
    \item[5.] Repeat steps 3 and 4 until a predefined threshold value $x_2-x_1 < \epsilon$ is achieved. The value $x_2$ is now a dimension based lower bound on the success metric of Eq. \eqref{eq:optimization}.
\end{itemize}
\end{definition}

To the best of our knowledge the see-saw algorithm lacks any kind of serious theory behind it. Therefore there are no performance guarantees. The see-saw algorithm offers a dimension based lower bound, or the inner bound from here on, and an explicit construction for the states and POVMs that achieve this bound. Some mentions of this method can be found in the literature \cite{WeWo2001, AmBaChKrRa2019, LiDo2006, LiDo2007, LiLiDe2009}.

To obtain a result of optimality we need to produce an upper bound on the success metric of Eq. \eqref{eq:optimization}, or an outer bound. This can be done in the following way.

Suppose we have a quantum behavior $p(k|\rho_i , \Mo_j) = \tr{\rho_i \Mo_j(k)}$. As was proven in \cite{ChFaWr2020}, we can always find unitary matrices such that \begin{align}
    \Mo_j(k) = \frac 12 \id + \frac{U^j_k + U^{j\dag}_k}{4}.
\end{align}
The moment matrices $(\Gamma_i^\mathcal{O})_{j,k} = (\Gamma_i^\mathcal{O})_{\mathcal{O}_j,\mathcal{O}_k} \equiv \tr{\rho_i\mathcal{O}_j^\dag \mathcal{O}_k}$ (with monomials of $\id$ and the unitary operators $U^j_k$ and $U^{j\dag}_k$) have the following properties:\begin{align}\label{eq:unitary-hierarchy}\begin{split}
    \forall i : & \, \Gamma_i^\mathcal{O} \geq 0 \\
    \forall i, r \in \mathcal{O}_P: & \,  \sum_j (\alpha_j^r - \beta_j^r) \Gamma_i^\mathcal{O} = 0 \\
    \forall i,j,k: & \, \left( \Gamma_i^\mathcal{O} \right)_{\id, U_k^j} + \left( \Gamma_i^\mathcal{O} \right)_{\id, U_k^{j\dag}} = 4\left( p(k|i, j) - \frac 12 \right) \\
    \forall i, j : & \, \left( \Gamma_i^\mathcal{O} \right)_{j,j} = 1.
    \end{split}
\end{align}
Additionally, possible constraints from $\mathcal{O}_\Mo$ must be encoded into the moment matrices via conditions on the unitary matrices.

The existence of such moment matrices is a necessary condition for the behavior $p(k|\rho_i , \Mo_j) = \tr{\rho_i \Mo_j(k)}$ to be realizable in a quantum experiment. Moreover, the existence of such moment matrices is a semidefinite feasibility problem, which can be solved with efficient methods \cite{BoVa2004}. The optimization of any success metric over feasible behaviors is again an SDP.

The semidefinite feasibility problem defined by Eq. \eqref{eq:unitary-hierarchy} defines a first level in a hierarchy of SDPs. That is, a behavior $p(k|i,j)$ obtained from maximising a success metric over the feasible set is not necessarily realizable in any quantum experiment. What we do get is an upper bound on the success metric -- no quantum behavior can exceed this bound. Moreover, by considering monomials of length greater than one we obtain a converging hierarchy that converges on the set of quantum behaviors \cite{ChFaWr2020}. In practise we get very good results already on the first level. The notation $\mathcal{U}_1$ is used to specify that a solution is obtained by using the first level of the unitary hierarchy.

\subsection{Noncontextual polytope of correlations}

Contextuality is understood today as an important resource for quantum computation \cite{Bell1966, KoSp1967, Sp2005, SpBuKeToPr2009, MaPuKuReSp2016, ScSpWo2018, ScSp2018, SaCh2019, KuLoPu2019, ChFaWr2020, TaUo2020}. We will now give a brief introduction to the topic, focusing on recent developments. The reader is encouraged to check \cite{PuBaRu2012, Leifer2014} for a more complete introduction to the framework of ontological models.

Suppose there is a state space $\Lambda$, called the ontic state space. Every time a quantum state is prepared, a state $\lambda \in \Lambda$ is produced according to some probability distribution. The ontic states are considered complete descriptions of nature but they are generally speaking inaccessible to all observers. Therefore we associate each preparation $P$ with a corresponding probability measure $\mu_P$ over the state space $\Lambda$. 

When a measurement of some POVM $\Mo$ is performed, it is supposed that the value $\lambda$ completely determines the outcome. Note that this doesn't imply determinism. Instead we associate each effect $\Mo(i)$ with response function: \begin{align}\label{eq:born-rule}
    \int_\Lambda \xi_{\Mo}(i | \lambda ) \mu_\rho(\lambda)d\lambda = \tr{\rho \Mo(i)}.
\end{align}
The response function $\xi_{\Mo}(i | \lambda )$ determines the probability of obtaining outcome $i$ when $\rho$ was prepared and a measurement of $\Mo$ was preformed.

\begin{definition}
An ontological model consist of a measure space $(\Lambda , d\lambda)$\footnote{It would be more rigorous to say that the measure space is $(\Lambda , \Sigma)$ where $\Sigma$ is a $\sigma$-algebra. Here we assume the existence of a canonical measure $d\lambda$ that dominates each probability measure defined on $(\Lambda, \Sigma)$. There are some ontological models that don't allow this, but this assumption makes the notation somewhat simpler.} and two functions $\Delta$ and $\Xi$. The function $\Delta$ maps every quantum state $\rho$ to a set of probability measures $\Delta_\rho$. Likewise the function $\Xi$ maps every POVM $\Mo$ to a set of indicator functions $\Xi_\Mo$. Moreover, for each $\lambda \in \Lambda$ and for every $\Mo$ and $\xi \in \Xi_\Mo$: \begin{align}
    \sum_i \xi_{\Mo}(i | \lambda) = 1.
\end{align}
The ontological model is said to reproduce quantum predictions if it respects the Born rule defined in Eq. \eqref{eq:born-rule}. 
\end{definition}

Quantum theory has the property that mixed states don't have unique decompositions into pure states. Suppose we have a preparation device that can prepare a state $\rho = \frac 12 (\rho_1 + \rho_2) = \frac 12 (\rho_3 + \rho_4)$ through two distinct decompositions. It is a basic fact of quantum information theory that these decompositions are indistinguishable from each other. The principle of noncontextuality states that whenever two states are operationally indistinguishable, they should also be ontologically indistinguishable. On the ontological level this means that the probability measure associated with $\rho$ should be unique. For measurements the principle of noncontextuality states that each effect should be associated with a unique response function, no matter which POVM the effect is a part of. 

It is widely believed, in light of quantum theory, that nature doesn't allow a deterministic description. Therefore the non-classical features of quantum mechanics, such as contextuality and nonlocality, only manifest themselves in a statistical manner. The question is then how can contextuality be detected from a set of experimental data. Let $p(x|P , \Mo)$ denote the probability distribution of outcomes $x$ given that a measurement of $\Mo$ followed the preparation procedure $P$. 

\begin{definition}
Two preparation procedures $P_1$ and $P_2$ are operationally equivalent if $p(x|P_1 , \Mo) = p(x|P_2, \Mo)$ for all $\Mo$. Likewise, two measurements $\Mo_1$ and $\Mo_2$ are operationally equivalent if $p(x| P, \Mo_1) = p(x|P , \Mo_2)$ for all $P$.
\end{definition}

The notation $"\simeq "$ is used to denote operationally equivalent procedures for both preparations and measurements.

Whenever two procedures are operationally equivalent, the principle of noncontextuality implies that the procedures should also be ontologically equivalent. For preparations this means that $\mu_{P_1}(\lambda ) = \mu_{P_2}(\lambda )$ for all $\lambda$ whenever $P_1 \simeq P_2$. For measurements the principle on noncontextuality implies that operationally equivalent effects should be represented by a unique response function: \begin{align*}
    \xi_{\Mo_1}(i|\lambda ) = \xi_{\Mo_2}(i|\lambda ) 
\end{align*} for all $\lambda$ whenever $\Mo_1(i) \simeq \Mo_2(i)$.

It is convenient to collect all operational equivalences in two distinct sets: $\mathcal{O}_P$ will contain all operational equivalences of the form \begin{align*}
    \sum_i \alpha^r_i P_i = \sum_j \beta^r_j P_j
\end{align*} for some distinct sets of convex weights $\{\alpha^r_i\}_i$ and $\{\beta^r_j\}_j$. The variable $r$ indexes different operational equivalences. For every operational equivalence in $\mathcal{O}_P$ a corresponding ontological restriction must be satisfied by all noncontextual models: \begin{align*}
    \forall \lambda : \sum_i \alpha^r_i \mu_{P_i}(\lambda ) = \sum_j \beta^r_j \mu_{P_j}(\lambda ) 
\end{align*}
Likewise for measurements, a set $\mathcal{O}_\Mo$ will contain all operational equivalences of the form: \begin{align*}
    \sum_{i,j} \alpha^s_{i,j} \Mo_i(j) = \sum_{i',j'} \beta^s_{i',j'} \Mo_{i'}(j') 
\end{align*}for some set of convex weights $\{\alpha^s_{i,j}\}_{i,j}$ and $\{\beta^s_{i,j}\}_{i,j}$, where the variable $s$ indexes the set of operational equivalences.

Once the sets of operational equivalences $\mathcal{O}_P$ and $\mathcal{O}_M$ have been determined for a behavior $p(x|P_i , \Mo_j )$, the question if a noncontextual model exists for the behavior can be presented in a compact form.

\begin{definition}\label{def:noncontextual-model-exists}
A noncontextual ontological model exists for a behavior $p(x| P_i , \Mo_j)$ if there exists an ontic state space $\Lambda$, epistemic states $\{ \mu_{P_i}(\lambda )\}_i$ and response functions $\{ \xi_{\Mo_j}(k|\lambda ) \}_{j,k}$ such that: \begin{align*}
    \forall \lambda , j , k \,\,\,\,\, &  \xi_{\Mo_j}(k|\lambda ) \geq 0 \text{ (positivity of response functions)}\\
    \forall \lambda , j \,\,\,\,\, & \sum_k \xi_{\Mo_j}(k|\lambda ) = 1 \text{ (normalization of response functions)}\\
    \forall \lambda , s \,\,\,\,\, & \sum_{j,k}(\alpha^s_{j,k}-\beta^s_{j,k})\xi_{\Mo_j}(k|\lambda ) = 0 \text{ (operational equivalences $\mathcal{O}_\Mo$)}\\
    \forall \lambda , i \,\,\,\,\, & \mu_{P_i}(\lambda ) \geq 0 \text{ (positivity of epistemic states)}\\
    \forall i \,\,\,\,\, & \int_\Lambda \mu_{P_i}(\lambda ) d \lambda = 1 \text{ (normalization of epistemic states)}\\
    \forall \lambda , r \,\,\,\,\, & \sum_i ( \alpha^r_i - \beta^r_i ) \mu_{P_i}(\lambda ) = 0 \text{ (operational equivalences $\mathcal{O}_P$)}\\
    \forall i , j , k \,\,\,\,\, & \int_\Lambda \xi_{\Mo_j}(k| \lambda)\mu_{P_i}(\lambda )d\lambda = p(k| P_i, \Mo_j) \text{ (model reproduces data)}
\end{align*}
\end{definition}

A detailed description of methods that can be used to demonstrate a contextual advantage from any behavior is presented in \cite{ScSpWo2018}. We will use those methods to examine whether the communication tasks can be used to demonstrate a contextual advantage for quantum theory.

The first step is to characterize the so-called measurement assignment polytope. Suppose all measurements in a prepare-and-measure setup have $d$ outcomes and that there are $l$ distinct measurement procedures. Then the $ld$-dimensional vector $$\vec{\xi} = \begin{bmatrix} \xi_{\Mo_1}(1|\lambda^*) & \dots & \xi_{\Mo_1}(d|\lambda^*) & \xi_{\Mo_2}(1|\lambda^*) & \dots & \xi_{\Mo_l}(d|\lambda^*) \end{bmatrix},$$ defined for a specific ontic state $\lambda^*$, defines a noncontextual measurement assignment if it satisfies the first three conditions of Def. \ref{def:noncontextual-model-exists}. The set of all such assignments defines the measurement-assignment polytope which we must characterize by its vertices. The use of mathematical optimization software is encouraged to perform vertex enumeration\footnote{We used the function \texttt{fmel} from a free software called \texttt{PORTA}.}. The vertices are the extremal points of the convex measurement-assignment polytope.

Once vertex enumeration has been performed, a key observation is that any noncontextual model, no matter the size of the ontic state space $\Lambda$, can be reconstructed into a model defined by probability assignments to the finite set of extremal points of the measurement assignment polytope. That is, each preparation defines one epistemic state for each vertex in the measurement assignment polytope. Let $\kappa$ enumerate the vertices and let $\nu_P(\kappa)$ denote the epistemic states defined on the vertices.

\begin{definition}\label{def:noncontextual-model-exists-2}
A noncontextual ontological model exists for a behavior $p(x|P_i, \Mo_j  )$ if there exists epistemic states $\{ \nu_{P_i}(\kappa ) \}_{i,\kappa}$ such that: \begin{align*}
    \forall i , \kappa  \,\,\,\,\, & \nu_{P_i}(\kappa ) \geq 0 \text{ (positivity of epistemic states)}\\
    \forall i \,\,\,\,\, & \sum_\kappa \nu_{P_i}(\kappa ) = 1 \text{ (normalization of epistemic states)}\\
    \forall r , \kappa \,\,\,\,\, & \sum_i (\alpha^r_i - \beta^r_i ) \nu_{P_i}(\kappa ) = 0 \text{ (operational equivalences $\mathcal{O}_P$)}\\
    \forall i , j , k \,\,\,\,\, & \sum_\kappa \tilde{\xi}_{\Mo_j}(k|\kappa ) \nu_{P_i}(\kappa ) = p(k | P_i , \Mo_j), \text{ (model reproduces data)}
\end{align*} where $\tilde{\xi}_{\Mo_i}(j|\kappa )$ are the extremal response functions defined by the noncontextual measurement-assignment polytope.
\end{definition}

Another key observation from Def. \eqref{def:noncontextual-model-exists-2} is that each operational probability $p(k| P_i , \Mo_j)$ is a linear combination of a finite amount of variables. We could now proceed as in \cite{ScSpWo2018} and perform linear quantifier elimination in order to find all the noncontextual inequalities that define the whole noncontextual polytope of correlations. It's more convenient, however, to treat the equations of Def. \eqref{def:noncontextual-model-exists-2} as a linear program.

The last three equations of Def. \eqref{def:noncontextual-model-exists-2} can be encoded in a single matrix equation $\mathbf{M}\mathbf{x}=\mathbf{b}^*$, where $\mathbf{M}$ contains the operational equivalence parameters $\alpha^r_i - \beta^r_i$ and the extremal measurement assignments $\tilde{\xi}_{\Mo_j}(k|\kappa )$, $\mathbf{x}$ contains the epistemic states $\nu_{P_i}(\kappa )$ as a vector, and $\mathbf{b}^*$ contains the operational probabilities $p(k|P_i, \Mo_j)$ as well as zeros and ones corresponding to the normalization of epistemic states and operational equivalences $\mathcal{O}_P$. The positivity of epistemic states corresponds to the inequality $\mathbf{x} \geq 0$. Hence, the equations of Def. \eqref{def:noncontextual-model-exists-2} define a linear program (LP) whose primal feasibility is determined by the existence of $\mathbf{x}$ such that \begin{align}\label{eq:LP}\begin{split}
    \mathbf{M}\mathbf{x}&=\mathbf{b}^* \\
    \mathbf{x} &\geq 0
\end{split}\end{align}
Whenever such an $\mathbf{x}$ exists that Eq. \eqref{eq:LP} is fulfilled, we can be sure that a noncontextual ontological model exists for the given operational probabilities.

By Farkas' lemma \cite{Andersen2001}, it must be true that either the primal LP is feasible or a certificate of primal feasibility is negative. We can define the certificate of primal infeasibility as the solution to the Farkas dual of the primal LP: \begin{align}\label{eq:farkas-dual}
    \begin{split}
        \min_\mathbf{y} \mathbf{b^*}^{\top}\mathbf{y} \\
        \mathbf{M}^\top \mathbf{y} \geq 0.
    \end{split}
\end{align}
Whenever the primal LP is infeasible it's guaranteed that we can find a $\mathbf{y}$ such that $\mathbf{b^*}^{\top}\mathbf{y} < 0$. By finding the minimum value among these we will simultaneously find the noncontextuality inequality that is most violated by the operational probabilities.

\subsection{Grassmannian frames}

Let $V$ be an inner product space. A sequence $\fram$ is called a frame if there exists frame bounds $A$ and $B$ such that: \begin{align}
    A \no{v}^2 \leq \sum_{i=1}^n \abs{\innerp{v}{f_i}}^2 \leq B \no{v}^2
\end{align} for all $v \in V$. Some special cases of frames are listed below.

A frame is called: \begin{itemize}
    \item tight whenever the choice $A=B$ is possible.
    \item uniform whenever $\no{f_i} = 1$ for all $i$.
    \item equiangular whenever $\abs{\innerp{f_i}{f_j}} = c$ for all $i\neq j$ and $c\geq 0$. 
\end{itemize}
Additionally, whenever the number of frame elements $n$ coincides with the dimension $d$ of $V$, the frame is also a basis for $V$. An example of an equiangular uniform frame would be an orthonormal basis.

We are not particularly interested in frames that are also bases. There are many reasons to study overcomplete frames \cite{DuSc1952, Daubechies1992, BaCaHeLa2006}. An important concept is the correlation between frame elements.

\begin{definition}\label{def:maximal-frame-correlation}
Let $\fram$ be a uniform frame in $V$. Then \begin{align}
    \mathcal{M}\left( \fram \right) = \max_{j,k,j\neq k} \abs{\innerp{f_j}{f_k}}
\end{align}is the maximal frame correlation of $\fram$.
\end{definition}

The maximal frame correlation measures the maximal overlap between frame elements. A Grassmannian frame is simply a frame that minimizes the maximum overlap \cite{StHe2003, Leonhard2016}.

\begin{definition}\label{def:grassmannian-frame}
Let $\fram$ be a uniform frame in $V$. Then $\fram$ is called a Grassmannian frame if it is a solution to the problem \begin{align}
    \min \left\{  \mathcal{M}\left(  \fram \right) \right\}.
\end{align} 
\end{definition}
 At first glance it doesn't seem probable that there would exist analytical bounds on maximal frame correlations that only depend on the dimension of the inner product space and the number of frame elements. Yet such a result exists which we will make use of later on.
 
 \begin{proposition}\label{prop:frame-correlation-bounds}
 Let $\fram$ be a uniform frame for $\mathbb{C}^d$ or $\mathbb{R}^d$. Then \begin{align}\label{eq:frame-bound}
     \mathcal{M}\left( \fram \right) \geq \sqrt{\frac{n-d}{d(n-1)}}.
 \end{align}
 Moreover, equality in Eq. \eqref{eq:frame-bound} is achieved if and only if $\fram$ is a tight equiangular frame. Equality in Eq. \eqref{eq:frame-bound} requires $n \leq d^2$ for $\mathbb{C}^d$ and $n \leq \frac{d(d+1)}{2}$ for $\mathbb{R}^d$.
 \end{proposition}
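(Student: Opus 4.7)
The plan is to use the classical Gram matrix argument that yields the Welch bound. Form the $n\times n$ Gram matrix $G$ with entries $G_{ij}=\innerp{f_i}{f_j}$. Because the $f_i$ sit in a space of dimension at most $d$, the matrix $G$ is positive semidefinite of rank at most $d$. Uniformity gives $G_{ii}=1$ for every $i$, so $\tr{G}=n$, and a direct computation gives
\begin{align*}
    \tr{G^{2}} = \sum_{i,j}\abs{\innerp{f_i}{f_j}}^{2} = n + \sum_{i\neq j}\abs{\innerp{f_i}{f_j}}^{2}.
\end{align*}

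Next, I would apply the Cauchy--Schwarz inequality to the nonzero eigenvalues of $G$ (there are at most $d$ of them), which yields $(\tr{G})^{2}\leq d\cdot\tr{G^{2}}$. Combining with the identity above and the obvious bound $\sum_{i\neq j}\abs{\innerp{f_i}{f_j}}^{2}\leq n(n-1)\,\mathcal{M}(\fram)^{2}$ gives
\begin{align*}
    n^{2} \leq d\bigl(n + n(n-1)\,\mathcal{M}(\fram)^{2}\bigr),
\end{align*}
which rearranges directly to Eq.~\eqref{eq:frame-bound}. Equality forces (i) the nonzero eigenvalues of $G$ to coincide, which for a uniform frame is precisely the tight frame condition (all nonzero eigenvalues equal $n/d$), and (ii) every off-diagonal overlap to attain the common value $\mathcal{M}(\fram)$, i.e.\ the equiangular condition. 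Conversely, a uniform tight equiangular frame saturates both steps, giving the desired if-and-only-if.

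For the dimension restrictions, I would pass to the rank-one operators $F_i = f_i f_i^{*}$, viewed as vectors in the real Hilbert space of Hermitian (respectively symmetric) operators, equipped with the Hilbert--Schmidt inner product $\innerp{F_i}{F_j}_{\mathrm{HS}}=\abs{\innerp{f_i}{f_j}}^{2}$. At equality with common squared overlap $c^{2}=(n-d)/(d(n-1))$, the Gram matrix of the $F_i$ equals $(1-c^{2})\,\id + c^{2}J$, where $J$ is the all-ones matrix. Its eigenvalues are $(1-c^{2})+nc^{2}$ and $(1-c^{2})$ (with multiplicity $n-1$), both strictly positive whenever $c<1$, so the Gram matrix has full rank $n$ and the $F_i$ are linearly independent. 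Since the ambient space of Hermitian operators has real dimension $d^{2}$ over $\C$ and $d(d+1)/2$ over $\R$, the stated upper bounds on $n$ follow.

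The main obstacle I anticipate is carefully justifying the equivalence between the eigenvalue-equality half of Cauchy--Schwarz and the tight frame property for a uniform frame; one has to identify the nonzero spectrum of $G$ with that of the frame operator $S=\sum_i f_i f_i^{*}$ and note that a tight uniform frame forces $S = (n/d)\,\id$, giving the matching Gram spectrum. The rest is bookkeeping.
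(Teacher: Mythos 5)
The paper itself contains no proof of this proposition; it simply defers to the cited reference (Welch 1974), with the equality cardinality bounds going back to the classical equiangular-lines results used by Strohmer and Heath. Your argument is correct and is precisely the standard derivation from that literature: the Gram-matrix computation $\tr{G}=n$, $\tr{G^2}=n+\sum_{i\neq j}\abs{\innerp{f_i}{f_j}}^2$, Cauchy--Schwarz on the at most $d$ nonzero eigenvalues, and then the lift to rank-one operators $F_i=f_if_i^{*}$ with Gram matrix $(1-c^2)\id+c^2J$ to get linear independence inside the $d^2$- (resp.\ $d(d+1)/2$-) dimensional real space of Hermitian (resp.\ symmetric) operators. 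Two small points worth making explicit: equality in your Cauchy--Schwarz step requires the rank of $G$ to be exactly $d$, which is automatic here because a frame for $\mathbb{C}^d$ or $\mathbb{R}^d$ spans the space, and your identification of ``all nonzero eigenvalues of $G$ equal'' with tightness is exactly the statement that $G=F^{*}F$ and the frame operator $S=FF^{*}$ share their nonzero spectrum, forcing $S=(n/d)\id$; and the linear-independence step needs $c<1$, which holds for all $d\geq 2$ since $\frac{n-d}{d(n-1)}<1$ there (the degenerate case $d=1$, where $c=1$, is the only one excluded, and the proposition is vacuous there anyway). With those clarifications your proof is complete and self-contained, which is arguably an improvement over the paper's citation-only treatment.
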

 \noindent
A proof for Prop. \ref{prop:frame-correlation-bounds} can be found from \cite{Welch1974}.

Prop. \ref{prop:frame-correlation-bounds} concludes our brief introduction to frames. Additional information on frame theory can be found from various excellent sources, such as \cite{Christensen2008}.

\section{Bounds on communication success}\label{bounds}

We are now ready to analyse the tasks $T_{4,1}$ and $T_{4,2}$. Let us first consider the classical version with a single bit as the communication medium.

\subsection{Classical bounds}

We will follow a similar technique for the tasks $T_{4,1}$ and $T_{4,2}$ as was used for $T_{3,1}$. The task $T_{4,1}$ involves four possibilities for the strings $s$ and input $a$ while there are six possibilities for the input $b$. The whole strategy can be conveniently presented in table form.

\begin{table}
\centering
    \begin{tabular}{c|c}
        $a$ & $r(a)$ \\ \hline
        $1000$ & $i$ \\
        $0100$ & $j$ \\
        $0010$ & $k$ \\
        $0001$ & $l$
    \end{tabular}\qquad
    \begin{tabular}{c|c|c}
        $b$ & $g(b,0)$ & $g(b,1)$ \\ \hline
        $1100$ & $x_1$ & $y_1$ \\
        $1010$ & $x_2$ & $y_2$ \\
        $1001$ & $x_3$ & $y_3$ \\
        $0110$ & $x_4$ & $y_4$ \\
        $0101$ & $x_5$ & $y_5$ \\
        $0011$ & $x_6$ & $y_6$ 
    \end{tabular}
    
    \begin{tabular}{c|c|c|c|c}
        $s$ & $a$ & $b$ & $r(a)$ & $g(b, r(a))$ \\ \hline
        $1000$ & $0100$ & $0011$ & $j$ & $g(0011, j)$ \\
        $1000$ & $0010$ & $0101$ & $k$ & $g(0101, k)$ \\
        $1000$ & $0001$ & $0110$ & $l$ & $g(0110, l)$ \\
        $0100$ & $1000$ & $0011$ & $i$ & $g(0011, i)$ \\
        $0100$ & $0010$ & $1001$ & $k$ & $g(1001, k)$ \\
        $0100$ & $0001$ & $1010$ & $l$ & $g(1010, l)$ \\
        $0010$ & $1000$ & $0101$ & $i$ & $g(0101, i)$ \\
        $0010$ & $0100$ & $1001$ & $j$ & $g(1001, j)$ \\
        $0010$ & $0001$ & $1100$ & $l$ & $g(1100, l)$ \\
        $0001$ & $1000$ & $0110$ & $i$ & $g(0110, i)$ \\
        $0001$ & $0100$ & $1010$ & $j$ & $g(1010, j)$ \\
        $0001$ & $0010$ & $1100$ & $k$ & $g(1100, k)$ 
\end{tabular}\newline
\caption{General classical strategy for the task $T_{4,1}$.}
\label{tab:c41-general-strategy}
\end{table}
By looking at the rows of Tab. \ref{tab:c41-general-strategy} where the input $b$ is identical, we get the following constraints on the messages $r(a):$\begin{align*}
    i \neq j , \, i \neq k , \, i \neq l , \, j \neq k , \, j \neq l , \, l \neq k.
\end{align*}That means that, in order to never make a mistake, each message $r(a)$ needs to be distinct. However, the message  $r(a)$ can only take on the binary values of $0$ and $1$, and hence at least two of the inequalities must be violated. Each violation leads to one mistake, because each row in Tab. \ref{tab:c41-general-strategy} with identical $b$ has a different string $s$. Therefore any classical strategy formed by giving concrete values to Tab. \ref{tab:c41-general-strategy} must contain at least two mistakes. An optimal classical strategy for the task $T_{4,1}$ is presented in Tab. \ref{tab:c41-optimal-strategy}. The average classical success probability for the task $T_{4,1}$ is $5/6$.  

\begin{table}
\centering
    \begin{tabular}{c|c}
        $a$ & $r(a)$ \\ \hline
        $1000$ & $0$ \\
        $0100$ & $0$ \\
        $0010$ & $1$ \\
        $0001$ & $1$
    \end{tabular}\qquad
    \begin{tabular}{c|c|c}
        $b$ & $g(b,0)$ & $g(b,1)$ \\ \hline
        $1100$ & $-$ & $3$ \\
        $1010$ & $4$ & $2$ \\
        $1001$ & $3$ & $2$ \\
        $0110$ & $4$ & $1$ \\
        $0101$ & $3$ & $1$ \\
        $0011$ & $1$ & $-$ 
    \end{tabular}
    
    \begin{tabular}{c|c|c|c|c}
        $s$ & $a$ & $b$ & $r(a)$ & $g(b, r(a))$ \\ \hline
        $1000$ & $0100$ & $0011$ & $0$ & $1$ \\
        $1000$ & $0010$ & $0101$ & $1$ & $1$ \\
        $1000$ & $0001$ & $0110$ & $1$ & $1$ \\
        $0100$ & $1000$ & $0011$ & $0$ & $\red{1}$ \\
        $0100$ & $0010$ & $1001$ & $1$ & $2$ \\
        $0100$ & $0001$ & $1010$ & $1$ & $2$ \\
        $0010$ & $1000$ & $0101$ & $0$ & $3$ \\
        $0010$ & $0100$ & $1001$ & $0$ & $3$ \\
        $0010$ & $0001$ & $1100$ & $1$ & $3$ \\
        $0001$ & $1000$ & $0110$ & $0$ & $4$ \\
        $0001$ & $0100$ & $1010$ & $0$ & $4$ \\
        $0001$ & $0010$ & $1100$ & $1$ & $\red{3}$ 
\end{tabular}\newline
\caption{An optimal classical strategy for the task $T_{4,1}$. Red values in the last column indicate erroneous guesses by Bob. The guesses $g(1100, 0)$ and $g(0011,1)$ are not listed because they aren't needed in this specific implementation of the strategy.}
\label{tab:c41-optimal-strategy}
\end{table}

As the final classical example, let us introduce the task $T_{4,2}$. Now there are six possibilities for input $a$ while there are four possibilities for the string $s$ and input $b$.

\begin{table}
\centering
    \begin{tabular}{c|c}
        $a$ & $r(a)$ \\ \hline
        $1100$ & $i$ \\
        $1010$ & $j$ \\
        $1001$ & $k$ \\
        $0110$ & $l$ \\
        $0101$ & $m$ \\
        $0011$ & $n$ 
    \end{tabular}\qquad
    \begin{tabular}{c|c|c}
        $b$ & $g(b,0)$ & $g(b,1)$ \\ \hline
        $1000$ & $x_1$ & $y_1$ \\
        $0100$ & $x_2$ & $y_2$ \\
        $0010$ & $x_3$ & $y_3$ \\
        $0001$ & $x_4$ & $y_4$ 
    \end{tabular}
    
    \begin{tabular}{c|c|c|c|c}
        $s$ & $a$ & $b$ & $r(a)$ & $g(b, r(a))$ \\ \hline
        $1000$ & $0110$ & $0001$ & $l$ & $g(0001, l)$ \\
        $1000$ & $0101$ & $0010$ & $m$ & $g(0010, m)$ \\
        $1000$ & $0011$ & $0100$ & $n$ & $g(0100, n)$ \\
        $0100$ & $1010$ & $0001$ & $j$ & $g(0001, j)$ \\
        $0100$ & $1001$ & $0010$ & $k$ & $g(0010, k)$ \\
        $0100$ & $0011$ & $1000$ & $n$ & $g(1000, n)$ \\
        $0010$ & $1100$ & $0001$ & $i$ & $g(0001, i)$ \\
        $0010$ & $1001$ & $0100$ & $k$ & $g(0100, k)$ \\
        $0010$ & $0101$ & $1000$ & $m$ & $g(1000, m)$ \\
        $0001$ & $1100$ & $0010$ & $i$ & $g(0010, i)$ \\
        $0001$ & $1010$ & $0100$ & $j$ & $g(0100, j)$ \\
        $0001$ & $0110$ & $1000$ & $l$ & $g(1000, l)$ 
\end{tabular}\newline
\caption{General classical strategy for the task $T_{4,2}$.}
\label{tab:c42-general-strategy}
\end{table}
Again by looking at the rows of Tab. \ref{tab:c42-general-strategy} where the input $b$ is identical, we get the following constraints: \begin{align}
    i \neq j \neq l , \, i \neq k \neq m , \, j \neq k \neq n , \, l \neq m \neq n.
\end{align}
Any violation of the above inequalities instantly leads to at least one mistake in the strategy, as each inequality was derived from rows with different strings $s$. By checking all possible choices for the messages $r(a)$ we find that the minimum number of violations is four in any classical strategy. An example of such strategy is presented in Tab. \ref{tab:c42-optimal-strategy}. The average classical success probability for the task $T_{4,2}$ is therefore $2/3$ (4 mistakes out of twelve guesses).

\begin{table}
\centering
    \begin{tabular}{c|c}
        $a$ & $r(a)$ \\ \hline
        $1100$ & $0$ \\
        $1010$ & $1$ \\
        $1001$ & $1$ \\
        $0110$ & $1$ \\
        $0101$ & $1$ \\
        $0011$ & $0$ 
    \end{tabular}\qquad
    \begin{tabular}{c|c|c}
        $b$ & $g(b,0)$ & $g(b,1)$ \\ \hline
        $1000$ & $2$ & $3$ \\
        $0100$ & $1$ & $3$ \\
        $0010$ & $4$ & $1$ \\
        $0001$ & $3$ & $1$ 
    \end{tabular}
    
    \begin{tabular}{c|c|c|c|c}
        $s$ & $a$ & $b$ & $r(a)$ & $g(b, r(a))$ \\ \hline
        $1000$ & $0110$ & $0001$ & $1$ & $1$ \\
        $1000$ & $0101$ & $0010$ & $1$ & $1$ \\
        $1000$ & $0011$ & $0100$ & $0$ & $1$ \\
        $0100$ & $1010$ & $0001$ & $1$ & $\red{1}$ \\
        $0100$ & $1001$ & $0010$ & $1$ & $\red{1}$ \\
        $0100$ & $0011$ & $1000$ & $0$ & $2$ \\
        $0010$ & $1100$ & $0001$ & $0$ & $3$ \\
        $0010$ & $1001$ & $0100$ & $1$ & $3$ \\
        $0010$ & $0101$ & $1000$ & $1$ & $3$ \\
        $0001$ & $1100$ & $0010$ & $0$ & $4$ \\
        $0001$ & $1010$ & $0100$ & $1$ & $\red{3}$ \\
        $0001$ & $0110$ & $1000$ & $1$ & $\red{3}$ 
\end{tabular}\newline
\caption{An optimal classical strategy for the task $T_{4,2}$. The red values indicate erroneous guesses by Bob.}
\label{tab:c42-optimal-strategy}
\end{table}

\subsection{Contextual bounds}

By introducing operational equivalences to Alice's preparations\footnote{We could also consider operational equivalences for Bob's measurements. However, it's sufficient to only consider operational equivalences for preparations in order to observe a quantum advantage.} we can produce noncontextuality inequalities that bound the success chance of noncontextual models on communication tasks of any type with $n \geq 4$. In this way we can compare noncontextual models to the simplest case of one bit as the communication medium, while simultaneously observing a quantum advantage using SDP methods.

In the partial ignorance communication task of type $T_{4,1}$ Charlie sends one of four strings to Alice: $1000$, $0100$, $0010$ or $0001$. Alice prepares a state from the set $\{ \rho_i \}_{i=1}^4$, where the subscript $i$ corresponds with the index of the 1 in Alice's input string. Bob will receive one of the following input strings from Charlie: $1100$, $1010$, $1001$, $0110$, $0101$ or $0011$. For each input string Bob will have a binary POVM $\Mo_{ij}$, where the subscript $ij$ will indicate the indexes of the 0's in Bob's input. When Bob's measurements are labeled in this way all Bob's POVMs will contain the index of Alice's preparation in the label. In other words, the success metric for the task $T_{4,1}$ will be a combination of terms $\pm \tr{\rho_i\Mo_{jk}(1)}$ where either $i=j$ or $i=k$. The term will be positive if $i=k$ and negative if $i=j$. By maximizing this success metric Bob's output will coincide with the correct answer as often as possible on average.

\begin{example}{\textit{(Contextual $T_{4,1}$)}}\label{ex:contextual-c41}
Let us consider the task $T_{4,1}$ with a non-trivial operational constraint between Alice's preparations. As there are four possible states that Alice can prepare, the natural choice for the operational equivalence is $\rho_1 + \rho_2 \simeq \rho_3 + \rho_4$. We wish to optimize the following problem:
\begin{subequations}
\begin{align}\label{eq:c-412formulation}
\begin{split}
\max  \quad & -\tr{\rho_1 \Mo_{12}(1)} - \tr{\rho_1 \Mo_{13}(1)} - \tr{\rho_1 \Mo_{14}(1)} \\
& + \tr{\rho_2 \Mo_{12}(1)} - \tr{\rho_2 \Mo_{23}(1)} - \tr{\rho_2 \Mo_{24}(1)} \\
& + \tr{\rho_3 \Mo_{13}(1)} + \tr{\rho_3 \Mo_{23}(1)} - \tr{\rho_3 \Mo_{34}(1)} \\
& + \tr{\rho_4 \Mo_{14}(1)} + \tr{\rho_4 \Mo_{24}(1)} + \tr{\rho_4 \Mo_{34}(1)} 
\end{split}\\
\text{subject to} \quad & \rho_i \geq 0, \quad  i=1,2,3,4 \\
& \tr{\rho_i}=1, \quad i=1,2,3,4 \\
& \rho_1 + \rho_2 = \rho_3 + \rho_4 \label{eq:c-412formulation-op}\\
& \Mo_j(1),\Mo_j(2) \geq 0, \quad j=12,13,14,23,24,34 \\
& \Mo_j(1) + \Mo_j(2) = \id , \quad j=12,13,14,23,24,34.
\end{align}
\end{subequations}
Note that the objective function is written in terms of the first effect of each POVM.

The unitary hierarchy $\mathcal{U}_1$ converges on $4.828427123$, while the see-saw algorithm converges on $4.828427104$ for qubits. Hence the outer bound on the first level of the unitary SDP hierarchy matches the inner bound for qubits by 7 decimals. Note that the outer bound doesn't depend on dimension. Instead it is valid for all quantum systems that conform to the operational equivalence between preparations. The average success probability, as given by the implementation found by the see-saw algorithm, is $0.9023689$, much higher than the average classical success probability.

Performing vertex enumeration we find that the measurement assignment polytope has 64 vertices. Hence there are $256$ variables $\nu_{P_i}(\kappa )$. We then from the matrix $\mathbf{M}$ and query a linear program for the optimal solution to Eq. \eqref{eq:c-412formulation} within noncontextual ontological models. We find that the answer is 4 within numerical accuracy. By forming the Farkas' dual we find that the noncontextual inequality that is most violated by the optimal quantum implementation is: \begin{align*}
     &+ 0.000764p_{1, 12} + 0.000764p_{2, 12} - 0.000764p_{3, 12} - 0.000764p_{4, 12} \\
     &+ 0.250121p_{1, 13} - 0.249879p_{2, 13} - 0.250121p_{3, 13} + 0.249879p_{4, 13} \\
     &+ 0.249258p_{1, 14} - 0.249258p_{2, 14} + 0.249258p_{3, 14} - 0.249258p_{4, 14} \\
     &- 0.250742p_{1, 23} + 0.250742p_{2, 23} - 0.250742p_{3, 23} + 0.250742p_{4, 23} \\
     &- 0.250121p_{1, 24} + 0.249879p_{2, 24} + 0.250121p_{3, 24} - 0.249879p_{4, 24} \\
     &- 0.000764p_{1, 34} - 0.000764p_{2, 34} + 0.000764p_{3, 34} + 0.000764p_{4, 34} \leq 1,
\end{align*}
where $p_{i,jk}$ is short-hand for $p(1| P_i , \Mo_{jk} )$. Our quantum implementation achieves a value of $1.414213561$ for this inequality, or $\sqrt{2}$ within numerical accuracy, a violation of $\sqrt{2}-1$ for the noncontextual bound.

As a final observation we can directly compare the noncontextual bound on the success metric of Eq. \eqref{eq:c-412formulation} to the optimal bit implementation as follows. The terms in the success metric in Eq. \eqref{eq:c-412formulation} are written in the same order as the rows in Tab. \ref{tab:c41-optimal-strategy}. A correct value in the bit implementation corresponds to a $1$ or $0$ depending on the sign of the term in the success metric. An incorrect value in the bit implementation leads to a degradation of the success metric by 1. As it happens the bit implementation also obtains a value of 4 for the success metric.
\end{example}

In the partial ignorance communication task of type $T_{4,2}$ Charlie still has 4 input strings $s$: $1000$, $0100$, $0010$ and $0001$. The inputs of Alice and Bob are reversed in a sense. Alice will receive one of the following input strings: $1100$, $1010$, $1001$, $0110$, $0101$ and $0011$. Bob, on the other hand, will receive one of the following input strings: $1000$, $0100$, $0010$ or $0001$. Hence Alice will need to prepare states with 6 distinct labels although some of these states may be identical, and Bob will need to perform four ternary measurements. Alice will label her states as $\rho_{ij}$, where the subscript $ij$ will indicate the indices of the 1's in her input string. Bob will label his POVMs as $\Mo_k$, where the $k$ will correspond to the index of the 1 in Bob's input. Each of Bob's POVM $\Mo_k$ will have the set $\{ 1,2,3,4 \} \setminus k$ as the outcome set. Thus upon receiving measurement outcome $l$ Bob will guess that the index of the 1 in $s$ was $l$. The success metric that Alice and Bob try to maximize is $$\sum_{\substack{i,j,k,l=1 \\ i \neq j \neq k \neq l}}^4 \tr{\rho_{ij}\Mo_k(l)}.$$ We will, however, relabel the effects with the following rule: $$\Mo_k(l) = \begin{cases} \Mo_k(l), &  \mbox{ if } l < k \\ \Mo_k(l-1), & \mbox{ if } l > k.\end{cases}$$ With this relabeling all Bob's measurements will have the outcome set $\{ 1,2,3\}$. This makes the SDPs somewhat easier to program, but as a drawback Bob's outcome will not directly correspond with his guess, but has to be interpreted according to the relabeling.

\begin{example}{\textit{(Contextual $T_{4,2}$)}}\label{ex:contextual-c42}
The success metric for the task $T_{4,2}$, written in terms of the first two effects, is written as:
\begin{subequations}
\begin{align}\label{eq:c42-formulation}
\begin{split}
\max  \quad & \tr{\rho_{23} \Mo_4(1)} + \tr{\rho_{24} \Mo_3(1)} + \tr{\rho_{34} \Mo_2(1)} \\
& + \tr{\rho_{13} \Mo_4(2)} + \tr{\rho_{14} \Mo_3(2)} + \tr{\rho_{34} \Mo_1(1)} \\
& + 1 - \tr{\rho_{12} (\Mo_4(1) + \Mo_4(2))} + \tr{\rho_{14} \Mo_2(2)} + \tr{\rho_{24} \Mo_1(2)} \\
& + 3 - \tr{\rho_{12} ( \Mo_3(1) + \Mo_3(2) )} - \tr{\rho_{13} ( \Mo_2(1) + \Mo_2(2) )} \\
& - \tr{\rho_{23} ( \Mo_1(1) + \Mo_1(2) ) } 
\end{split}\\
\text{subject to} \quad & \rho_{ij} \geq 0, \quad  ij=12,13,14,23,24,34 \\
& \tr{\rho_{ij}}=1, \quad ij=12,13,14,23,24,34 \\
& \rho_{12} + \rho_{13} = \rho_{14} + \rho_{23} = \rho_{24} + \rho_{34} \label{eq:c42-formulation-op}\\
& \Mo_k(1),\Mo_k(2), \Mo_k(3) \geq 0, \quad k=1,2,3,4 \\
& \Mo_k(1) + \Mo_k(2) + \Mo_k(3) = \id , \quad k=1,2,3,4.
\end{align}
\end{subequations}

The order of terms in the success metric of Eq. \eqref{eq:c42-formulation} is the same is the rows in Tab. \ref{tab:c42-optimal-strategy}. A nontrivial operational equivalence between preparations is defined in Eq. \eqref{eq:c42-formulation-op}

Querying a linear program for the maximal value to Eq. \eqref{eq:c42-formulation} we get a maximal value of 8 within numerical accuracy for all noncontextual models. Interestingly this again coincides with the optimal bit implementation of Tab. \ref{tab:c42-optimal-strategy}. The see-saw method also converges on exactly 8 for both qubits and qutrits. This leads us to suspect that no contextual advantage is possible in this task. The unitary hierarchy $\mathcal{U}_1$ seems to break down for this task, so we will need to prove the outer bound with analytical methods.
\end{example}

\begin{proposition}\label{prop:c42-prop}
The outer bound for the success metric of Eq. \eqref{eq:c42-formulation} equals exactly 8.
\end{proposition}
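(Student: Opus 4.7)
The plan is to exploit the additive structure of the success metric in Eq.~\eqref{eq:c42-formulation}. Grouping the twelve correct-guess contributions according to which POVM $\Mo_k$ they involve, and collapsing the constants via $\Mo_k(3)=\id-\Mo_k(1)-\Mo_k(2)$, the metric takes the clean form $S=T_1+T_2+T_3+T_4$ with
\begin{align*}
T_1 &= \tr{\rho_{34}\Mo_1(1)}+\tr{\rho_{24}\Mo_1(2)}+\tr{\rho_{23}\Mo_1(3)},\\
T_2 &= \tr{\rho_{34}\Mo_2(1)}+\tr{\rho_{14}\Mo_2(2)}+\tr{\rho_{13}\Mo_2(3)},\\
T_3 &= \tr{\rho_{24}\Mo_3(1)}+\tr{\rho_{14}\Mo_3(2)}+\tr{\rho_{12}\Mo_3(3)},\\
T_4 &= \tr{\rho_{23}\Mo_4(1)}+\tr{\rho_{13}\Mo_4(2)}+\tr{\rho_{12}\Mo_4(3)}.
\end{align*}
It suffices to show $T_k\leq 2$ for each $k$, for then $S\leq 8$, matching the classical value that is already achieved by the strategy in Tab.~\ref{tab:c42-optimal-strategy}.

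The central tool is the operational equivalence \eqref{eq:c42-formulation-op}. Let $\tau:=\rho_{12}+\rho_{13}=\rho_{14}+\rho_{23}=\rho_{24}+\rho_{34}$, so $\tr\tau=2$. For $T_2$ and $T_3$ the three states involved lie in three \emph{distinct} OE pairs, so each admits an independent substitution $\rho_{ij}=\tau-\rho_{i'j'}$, where $\rho_{i'j'}$ is the OE partner. Plugging these substitutions in and collapsing via $\sum_{l}\Mo_k(l)=\id$ gives, for instance,
\begin{align*}
T_2 = \tr{\tau\cdot\id}-\tr{\rho_{24}\Mo_2(1)}-\tr{\rho_{23}\Mo_2(2)}-\tr{\rho_{12}\Mo_2(3)}\leq 2,
\end{align*}
since $\tr\tau=2$ and the remaining three trace terms are nonnegative. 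An entirely analogous substitution for $T_3$ gives $T_3\leq 2$.

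The asymmetric cases are $T_1$ and $T_4$, where two of the three states fall in the \emph{same} OE pair (namely $\rho_{24},\rho_{34}$ in $T_1$ and $\rho_{12},\rho_{13}$ in $T_4$), so only two independent substitutions are at one's disposal. My plan is to substitute one state from the doubled pair together with the state from the other involved pair, leave the third state untouched, and then rewrite the two surviving effects as $\Mo_k(i)+\Mo_k(j)=\id-\Mo_k(m)$ for the remaining outcome $m$ and expand $\tau$ as the sum of the untouched state and its OE partner. The terms containing the untouched state cancel exactly, yielding
\begin{align*}
T_1=2-\tr{\rho_{24}\Mo_1(1)}-\tr{\rho_{34}\Mo_1(2)}-\tr{\rho_{14}\Mo_1(3)}\leq 2,
\end{align*}
and similarly $T_4\leq 2$. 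Adding the four bounds produces $S\leq 8$.

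The main obstacle is precisely this cancellation step for $T_1$ and $T_4$: because the OE of the paper groups states in an asymmetric fashion, one must pick the correct two substitutions and verify that the residual terms involving the untouched state cancel when one uses $\sum_l\Mo_k(l)=\id$ together with the definition of $\tau$. Once the right substitution is selected the algebra is routine, but without careful bookkeeping it is easy to produce a looser bound such as $T_k\leq 3$ that would only give $S\leq 12$. The OE pair that is doubled in $T_k$ is what enables the key identity $\tau=\rho_a+\rho_b$ to be used precisely for the one state whose contribution must disappear.
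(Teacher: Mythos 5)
Your proof is correct and follows essentially the same route as the paper's: group the twelve terms of the success metric by POVM and use the operational equivalence \eqref{eq:c42-formulation-op} together with POVM normalization and positivity of the omitted traces to bound each group by $\tr{\tau}=2$; the paper carries this computation out explicitly only for $\Mo_1$ (your asymmetric case $T_1$) and appeals to the analogous argument for the remaining POVMs. One minor aside: tightness of the bound is witnessed by the see-saw qubit implementation in Example~\ref{ex:contextual-c42} rather than by Tab.~\ref{tab:c42-optimal-strategy}, since that bit strategy does not satisfy the operational equivalence \eqref{eq:c42-formulation-op}.
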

\begin{proof}
To prove the outer bound we will write the outcome distributions of the first POVM as a communication matrix: \begin{align*}
    A_1=\begin{tabular}{c|c c c}
         & $\Mo_1(1)$ & $\Mo_1(2)$ & $\Mo_1(3)$ \\ \hline
         $\rho_{12}$ & $\tr{\rho_{12}\Mo_1(1)}$ & $\tr{\rho_{12}\Mo_1(2)}$ & $\tr{\rho_{12}\Mo_1(3)}$ \\ 
         $\rho_{13}$ & $\tr{\rho_{13}\Mo_1(1)}$ & $\tr{\rho_{13}\Mo_1(2)}$ & $\tr{\rho_{13}\Mo_1(3)}$ \\ 
         $\rho_{14}$ & $\tr{\rho_{14}\Mo_1(1)}$ & $\tr{\rho_{14}\Mo_1(2)}$ & $\tr{\rho_{14}\Mo_1(3)}$ \\ 
         $\rho_{23}$ & $\tr{\rho_{23}\Mo_1(1)}$ & $\tr{\rho_{23}\Mo_1(2)}$ & \blue{$\tr{\rho_{23}\Mo_1(3)}$} \\ 
         $\rho_{24}$ & $\tr{\rho_{24}\Mo_1(1)}$ & \blue{$\tr{\rho_{24}\Mo_1(2)}$} & $\tr{\rho_{24}\Mo_1(3)}$ \\ 
         $\rho_{34}$ & \blue{$\tr{\rho_{34}\Mo_1(1)}$} & $\tr{\rho_{34}\Mo_1(2)}$ & $\tr{\rho_{34}\Mo_1(3)}$ \\ 
    \end{tabular}
\end{align*}
The matrix $A_1$ is row-stochastic, each row sums up to 1. The terms of $A_1$ that show up in the success metric Eq. \eqref{eq:c42-formulation} are highlighted in blue. Those three terms are the ones that we need to maximize in this communication matrix. From the operational equivalence $\rho_{14}+\rho_{23}\simeq\rho_{24}+\rho_{34}$ it follows that \begin{align*}
    2 &= \tr{\rho_{24}(\Mo_1(1)+\Mo_1(2)+\Mo_1(3))}+\tr{\rho_{34}(\Mo_1(1)+\Mo_1(2)+\Mo_1(3))}\\
    &= \tr{\rho_{24}(\Mo_1(1)+\Mo_1(2))}+\tr{\rho_{34}(\Mo_1(1)+\Mo_1(2))}+\tr{(\rho_{14}+\rho_{23})\Mo_1(3)}\\
    &\geq \tr{\rho_{34}\Mo_1(1)}+\tr{\rho_{24}\Mo_1(2)}+\tr{\rho_{23}\Mo_1(3)}
\end{align*}
Hence in any operational theory that satisfies the operational equivalence $\rho_{14}+\rho_{23}\simeq\rho_{24}+\rho_{34}$ the inequality $\tr{\rho_{34}\Mo_1(1)}+\tr{\rho_{24}\Mo_1(2)}+\tr{\rho_{23}\Mo_1(3)} \leq 2$ holds. It turns out that a similar argument holds for the other POVMs as well so that the maximum value for Eq. \eqref{eq:c42-formulation} is 8 for all operational theories that that satisfies the operational equivalence \eqref{eq:c42-formulation-op}.
\end{proof}

The previous proposition concludes our contextual examples. In the next section we will study the tasks $T_{4,1}$ and $T_{4,2}$ without the operational equivalences.

\subsection{General bounds}

We can produce inner bounds on the tasks $T_{4,1}$ and $T_{4,2}$ without the operational equivalences \eqref{eq:c-412formulation-op} and \eqref{eq:c42-formulation-op} by using the see-saw algorithm. Tab. \ref{tab:see-saw-results} collects these results.

\begin{table}[ht]
    \centering
\begin{tabular}{c|c|c|c}
    $d$ & 2 & 3 & 4 \\ \hline
    $T_{4,1}$ & 10.89897946 & 11.65685425 & 12.0 \\
    $T_{4,2}$ & 8.0 & 12.0 & 12.0
\end{tabular}
    \caption{Inner bounds on the tasks $T_{4,1}$ and $T_{4,2}$ as produced by the see-saw algorithm.}
    \label{tab:see-saw-results}
\end{table}
The bounds on $T_{4,1}$ in Tab. \ref{tab:see-saw-results} are presented in terms of both effects, i.e., the negative terms in Eq. \eqref{eq:c-412formulation} are replaced by the corresponding positive term.

We can see from Tab. \ref{tab:see-saw-results} that  there is a major discrepancy between the tasks $T_{4,1}$ and $T_{4,2}$. The inner bounds on task $T_{4,1}$ gradually increase from the minimum value obtained by qubits to the maximum value of $12$ obtained by four-dimensional quantum states. The value obtained by qubits is slightly greater than the contextual limit for the task.

For the task $T_{4,2}$ the inner bound is 8 for qubits within numerical accuracy, the same as the outer bound found in Prop. \ref{prop:c42-prop}. For qutrits the inner bound already reaches the maximum value of 12. We will now try to make sense of these numbers.

\begin{definition}
Let $A$ be a communication matrix, i.e., $A_{ij} = \tr{\rho_i\Mo(j)}$ for some finite set of quantum states $\{ \rho_i \}_i$ and a POVM $\Mo$ with a finite outcome set. The function $\lambda_{max}$ is defined as $\lambda_{max}(A) := \sum_j \max_i(A_{ij})$.
\end{definition}

The function $\lambda_{max}$ is an ultraweak monotone on the set of communication matrices. Ultraweak monotones were extensively studied in \cite{HeKeLe2020}. However, for our present investigation a more important result is the following. \begin{proposition}\label{prop:lambda-max}
Let $A$ be a communication matrix. Then \begin{align*}
    \mathrm{rank}_{psd}(A) \geq \lambda_{max}(A),
\end{align*}where $\mathrm{rank}_{psd}$ is the positive semidefinite rank of a matrix.
\end{proposition}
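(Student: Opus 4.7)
The strategy is to first show that any $r$-dimensional positive semidefinite factorization of a communication matrix $A$ can be normalized into a bona fide quantum realization in the same dimension $r$, and then to bound $\lambda_{max}(A)$ column by column using elementary inequalities for positive semidefinite matrices.

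I would begin by fixing a factorization $A_{ij} = \tr{X_i Y_j}$ with $r\times r$ positive semidefinite $X_i$ and $Y_j$, where $r = \mathrm{rank}_{psd}(A)$. Setting $S = \sum_j Y_j$, row-stochasticity of $A$ (automatic since $\sum_j \Mo(j) = \id$ in any quantum realization of $A$) yields $\tr{X_i S} = 1$ for every $i$. When $S$ is invertible, the substitution $\rho_i = S^{1/2} X_i S^{1/2}$ and $\Mo(j) = S^{-1/2} Y_j S^{-1/2}$ leaves every trace pairing unchanged, turns the $\rho_i$ into density matrices, and turns $\{\Mo(j)\}_j$ into a POVM acting on an $r$-dimensional Hilbert space. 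If $S$ is singular I would restrict everything to $\ran(S)$, which merely yields a representation in a smaller dimension and hence only strengthens the claim.

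With a genuine $r$-dimensional quantum realization in hand, the bound follows from a three-step chain. For every column index $j$,
\begin{align*}
\max_i \tr{\rho_i \Mo(j)} \leq \| \Mo(j) \|_\infty \leq \tr{\Mo(j)},
\end{align*}
since the operator norm of a positive semidefinite matrix upper bounds every trace pairing with a density matrix, and is itself bounded by the trace of the matrix. Summing over $j$ and using $\sum_j \Mo(j) = \id$ then yields
\begin{align*}
\lambda_{max}(A) = \sum_j \max_i \tr{\rho_i \Mo(j)} \leq \sum_j \tr{\Mo(j)} = \tr{\id} = r = \mathrm{rank}_{psd}(A),
\end{align*}
which is the desired inequality.

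The main obstacle I anticipate is the normalization step when $S = \sum_j Y_j$ fails to be invertible; the clean fix is to project both families onto $\ran(S)$, on whose orthogonal complement the $Y_j$ vanish and so contribute nothing to any trace pairing. Once that definitional matter is dispatched, the rest of the proof reduces to a one-line chain of standard operator-norm and trace inequalities.
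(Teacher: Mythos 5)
Your proof is correct. Note that the paper itself gives no in-text argument for this proposition; it simply defers to \cite{LeWeWo2017}, so your contribution is a self-contained derivation, and it is essentially the standard argument underlying that reference. Both of your steps hold up: (i) given a psd factorization $A_{ij}=\tr{X_iY_j}$ of size $r=\mathrm{rank}_{psd}(A)$, row-stochasticity of a communication matrix gives $\tr{X_iS}=1$ with $S=\sum_j Y_j$, and conjugating by $S^{\pm 1/2}$ (after restricting to $\ran(S)$ when $S$ is singular, which is legitimate because $0\le Y_j\le S$ forces each $Y_j$ to vanish on $\ker S$) yields genuine density matrices and a POVM on a space of dimension at most $r$ with all trace pairings preserved; (ii) the chain $\max_i\tr{\rho_i\Mo(j)}\le\no{\Mo(j)}_\infty\le\tr{\Mo(j)}$, summed over $j$ and combined with $\sum_j\Mo(j)=\id$, gives $\lambda_{max}(A)\le\tr{\id}\le r$. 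One presentational remark: in the singular case you should say explicitly that the $X_i$ are also compressed to $\ran(S)$ (harmless, since $Y_j=PY_jP$ for the projection $P$ onto $\ran(S)$, so the pairings are unchanged), but this is a cosmetic point rather than a gap.
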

The above proposition was proved in \cite{LeWeWo2017}. For communication matrices it holds that a necessary and sufficient condition for a communication matrix to have a $d$-dimensional quantum implementation is that the positive semidefinite rank of the matrix is not greater than $d$ \cite{HeKeLe2020}. With this knowledge we can prove the outer bound on task $T_{4,2}$.

\begin{proposition}\label{prop:c42-qubits}
The outer bound on task $T_{4,2}$ is 8 for qubits.
\end{proposition}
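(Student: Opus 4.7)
The plan is to combine the $\lambda_{\max}$ monotone with the psd-rank bound of Prop.~\ref{prop:lambda-max}. First I would rewrite the success metric of Eq.~\eqref{eq:c42-formulation} in its natural form by using the completeness relation $\Mo_k(1)+\Mo_k(2)+\Mo_k(3)=\id$ and $\tr{\rho_{ij}\id}=1$ to turn each group of the form $c-\tr{\rho_{ij}(\Mo_k(1)+\Mo_k(2))}$ into a single positive term $\tr{\rho_{ij}\Mo_k(3)}$. After this housekeeping the objective becomes simply the sum of the twelve correct-guess probabilities, i.e., $\sum_{k=1}^{4} S_k$, where $S_k$ collects the three terms that use the POVM $\Mo_k$.

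For each fixed $k$, $S_k$ is a sum of three entries of the $6\times 3$ communication matrix $A_k$ defined by $(A_k)_{ij,l}=\tr{\rho_{ij}\Mo_k(l)}$. The three contributing pairs $(ij)$ are exactly those with $k\notin\{i,j\}$, and for each such pair the correct (relabeled) outcome is the unique index in $\{1,2,3\}$ corresponding to $\{1,2,3,4\}\setminus\{i,j,k\}$. A short combinatorial check shows that as $(ij)$ varies these three outcomes are distinct, so the three contributing entries of $A_k$ sit in three different columns (and three different rows). Consequently
\[
S_k \;\leq\; \sum_{l=1}^{3}\max_{ij}(A_k)_{ij,l} \;=\; \lambda_{\max}(A_k).
\]

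Next I would invoke Prop.~\ref{prop:lambda-max} to get $\lambda_{\max}(A_k)\leq \mathrm{rank}_{\mathrm{psd}}(A_k)$. For a qubit implementation the states $\rho_{ij}$ and effects $\Mo_k(l)$ are $2\times 2$ positive semidefinite matrices, so directly from the definition of the psd rank, $\mathrm{rank}_{\mathrm{psd}}(A_k)\leq 2$. Summing $S_k\leq 2$ over $k=1,\dots,4$ then yields $\sum_k S_k\leq 8$, which is exactly the claimed outer bound.

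The main obstacle I anticipate is the column-distinctness observation used to bound $S_k$ by $\lambda_{\max}(A_k)$; everything else is either routine manipulation of POVM normalisations or a direct appeal to Prop.~\ref{prop:lambda-max}. Since that combinatorial check is essentially the same enumeration as the list of valid triples in Tab.~\ref{tab:c42-general-strategy}, it is more bookkeeping than substance, and the argument should be very short once the matrices $A_k$ are written down.
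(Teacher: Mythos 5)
Your argument is correct and follows essentially the same route as the paper's proof: per-POVM communication matrices $A_k$, the observation that the three correct-guess entries occupy distinct columns so their sum is bounded by $\lambda_{\max}(A_k)$, and then $\lambda_{\max}(A_k)\leq\mathrm{rank}_{\mathrm{psd}}(A_k)\leq 2$ for a qubit implementation, summed over the four POVMs. The only cosmetic difference is that you get $\mathrm{rank}_{\mathrm{psd}}(A_k)\leq 2$ directly from the definition of psd rank, whereas the paper cites the equivalence with $d$-dimensional quantum implementability.
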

\begin{proof}
The task $T_{4,2}$ consists of four communication matrices. The communication matrix of the first POVM is the following: \begin{align*}
    A_1=\begin{tabular}{c|c c c}
         & $\Mo_1(1)$ & $\Mo_1(2)$ & $\Mo_1(3)$ \\ \hline
         $\rho_{12}$ & $\tr{\rho_{12}\Mo_1(1)}$ & $\tr{\rho_{12}\Mo_1(2)}$ & $\tr{\rho_{12}\Mo_1(3)}$ \\ 
         $\rho_{13}$ & $\tr{\rho_{13}\Mo_1(1)}$ & $\tr{\rho_{13}\Mo_1(2)}$ & $\tr{\rho_{13}\Mo_1(3)}$ \\ 
         $\rho_{14}$ & $\tr{\rho_{14}\Mo_1(1)}$ & $\tr{\rho_{14}\Mo_1(2)}$ & $\tr{\rho_{14}\Mo_1(3)}$ \\ 
         $\rho_{23}$ & $\tr{\rho_{23}\Mo_1(1)}$ & $\tr{\rho_{23}\Mo_1(2)}$ & \blue{$\tr{\rho_{23}\Mo_1(3)}$} \\ 
         $\rho_{24}$ & $\tr{\rho_{24}\Mo_1(1)}$ & \blue{$\tr{\rho_{24}\Mo_1(2)}$} & $\tr{\rho_{24}\Mo_1(3)}$ \\ 
         $\rho_{34}$ & \blue{$\tr{\rho_{34}\Mo_1(1)}$} & $\tr{\rho_{34}\Mo_1(2)}$ & $\tr{\rho_{34}\Mo_1(3)}$ \\ 
    \end{tabular}
\end{align*}
The blue color highlights those elements of $A_1$ that should be maximized. If $A_1$ is implementable with qubits, it follows that $\mathrm{rank}_{psd}(A_1) \leq 2$. But then by Prop. \ref{prop:lambda-max}, $\lambda_{max}(A_1)\leq 2$. This means that the sum of the elements highlighted in blue cannot be greater than 2. A similar argument holds for the other POVMs so the total outer bound on task $T_{4,2}$ for qubits is exactly 8.
\end{proof}

We could use Prop. \ref{prop:c42-qubits} to produce the outer bound 12 for qutrits. However, 12 is already the maximum value that the success metric can obtain, so there is no need to prove the outer bound for qutrits.

Let us now return to the task $T_{4,1}$. In this task Bob is using dichotomic measurements, that is, Bob is always trying to distinguish between two possible input states from Alice. The best strategy therefore tries to minimize the maximal overlap between Alice's states, so that they can be distinguished pair-wise as well as possible. This ensures there won't be any weakness in Alice and Bob's strategy that Charlie could exploit. We need just one more definition before we can prove that the bounds in Tab. \ref{tab:see-saw-results} are tight for $T_{4,1}$.

\begin{definition}\label{def:ambiguous-distinguishability}
The ambiguous distinguishability of two pure states $\kb{\varphi_1}{\varphi_1}$ and $\kb{\varphi_2}{\varphi_2}$ is defined as \begin{align*}
    \psuc = \frac 12 \left( 1 + \sqrt{1-\abs{\ip{\varphi_1}{\varphi_2}}^2} \right)
\end{align*}
\end{definition}

\begin{proposition}
The inner bounds presented in Tab. \ref{tab:see-saw-results} are tight for the task $T_{4,1}$.
\end{proposition}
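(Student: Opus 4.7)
The plan is to decouple $T_{4,1}$ into six independent pairwise state-discrimination problems, reduce to pure Alice states by convexity, and then bound the resulting sum of trace distances using Jensen's inequality combined with the Welch bound underlying Prop.~\ref{prop:frame-correlation-bounds}.

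First I would observe that each of Bob's six binary POVMs $\Mo_{jk}$ is used only in conjunction with Alice's states $\rho_j$ and $\rho_k$, so when the success metric of Eq.~\eqref{eq:c-412formulation} is rewritten in terms of both effects it splits into six independent Helstrom-type optimisations. The per-pair maximum over binary measurements equals $1+\tfrac12\trnorm{\rho_j-\rho_k}$, so the total success is bounded by $6 + \tfrac12\sum_{j<k}\trnorm{\rho_j-\rho_k}$. For fixed $\{\rho_k\}_{k\neq j}$ the function $\rho_j\mapsto \sum_{k\neq j}\trnorm{\rho_j-\rho_k}$ is convex, hence attains its maximum at an extreme point of the density-matrix set; iterating this observation over $j$ lets me assume all $\rho_j = \kb{\varphi_j}{\varphi_j}$ are pure, in which case $\tfrac12\trnorm{\rho_j-\rho_k}=\sqrt{1-|\ip{\varphi_j}{\varphi_k}|^2}$.

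Next I would apply Jensen's inequality to the concave function $x\mapsto \sqrt{1-x}$, obtaining
\begin{equation*}
\sum_{j<k}\sqrt{1-|\ip{\varphi_j}{\varphi_k}|^2}\;\le\; 6\sqrt{1-\tfrac16\sum_{j<k}|\ip{\varphi_j}{\varphi_k}|^2}.
\end{equation*}
The Welch identity underlying Prop.~\ref{prop:frame-correlation-bounds}, which follows immediately from $\tr{(\sum_j\kb{\varphi_j}{\varphi_j})^2}\ge n^2/d$, gives $\sum_{j<k}|\ip{\varphi_j}{\varphi_k}|^2 \ge \binom{4}{2}(4-d)/(3d)$, evaluating to $2$ for $d=2$, $2/3$ for $d=3$ and $0$ for $d=4$. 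Substituting yields the outer bounds $6+6\sqrt{2/3}$, $6+4\sqrt{2}$, and $12$, numerically matching the see-saw values in Tab.~\ref{tab:see-saw-results}.

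For saturation I would exhibit an equiangular uniform tight frame, i.e.\ a Grassmannian frame, that achieves equality in the Welch bound: the Bloch-sphere regular tetrahedron (SIC configuration) in $d=2$, an equiangular four-element frame in $\C^3$ (which exists because $4\le d(d+1)/2=6$), and the standard orthonormal basis in $d=4$. Paired with the Helstrom POVM for each pair of states, these give explicit strategies attaining each of the three bounds. The principal obstacle I foresee is the pure-state reduction, where one must carefully justify that block-wise optimisation over single states reaches the global maximum; after that the argument is a direct Jensen-plus-Welch calculation.
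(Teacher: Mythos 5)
Your proposal is correct, and its skeleton is the same as the paper's: split the metric into six Helstrom pair-discrimination problems, bound the pairwise overlaps via the Welch-type inequality of Prop.~\ref{prop:frame-correlation-bounds}, and note that equiangular tight frames saturate everything. Where you differ is in the key estimating step, and your version is actually tighter logically: the paper applies the max-correlation form of the Welch bound and argues informally that minimizing the maximal overlap (a Grassmannian frame) is optimal, which strictly speaking only bounds the worst pair, whereas your Jensen-plus-frame-potential argument, using $\tr{(\sum_j\kb{\varphi_j}{\varphi_j})^2}\geq n^2/d$ to bound $\sum_{j<k}\abs{\ip{\varphi_j}{\varphi_k}}^2$, genuinely bounds the \emph{sum} of the six terms, which is what the success metric requires; you also make explicit the reduction to pure states, which the paper assumes tacitly. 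On that last point, the obstacle you flag is not a real one: $\sum_{j<k}\trnorm{\rho_j-\rho_k}$ is jointly convex in $(\rho_1,\dots,\rho_4)$ (each summand is a norm of a linear expression), so its maximum over the product of state spaces is attained at a tuple of extreme points, i.e.\ all pure, with no need for a blockwise iteration argument. Two small refinements: existence of the $d=3$ saturating configuration should be justified by the simplex construction ($n=d+1$ unit vectors with pairwise inner product $-1/3$ in $\R^3\subset\C^3$), since the condition $4\leq d(d+1)/2$ from Prop.~\ref{prop:frame-correlation-bounds} is only necessary; and, strictly, tightness already follows once your outer bounds coincide with the see-saw values, because the see-saw output is itself an explicit achieving strategy, so the saturating frames are a welcome but optional addition.
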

\begin{proof}
Notice that the success metric \eqref{eq:c-412formulation} consists of 6 terms of the form $\tr{\rho_j\Mo_{ij}(1)}-\tr{\rho_i\Mo_{ij}(1)}$. We can interpret this term as $\psuc - \pfail =2\psuc -1$ for the POVM $\Mo_{ij}$.  Let $\{ \varphi_i \}_{i=1}^4$ be a Grassmannian frame for their span in $\mathbb{C}^d$. Then $\psuc \leq \frac 12 \left( 1 + \sqrt{1-\frac{4-d}{3d}} \right)$ by Prop. \ref{prop:frame-correlation-bounds}. Equality can be achieved if and only if $\{ \varphi_i \}_{i=1}^4$ is equiangular and that is the best possible success chance for Bob while minimizing the maximum overlap between Alice's states. For qubits we get $\psuc \leq \frac 12 \left(1+\sqrt{\frac 23}\right)$. Omitting the $-1$ terms we get an outer bound of $12\psuc =6 \left(1+\sqrt{\frac 23}\right) \approx 10.89897949$, matching the value in Tab. \ref{tab:see-saw-results} by 6 decimals. For qutrits the corresponding values are $\psuc \leq \frac 12 \left( 1 + \frac{2\sqrt 2}{3} \right)$ and outer bound $12\psuc =6 \left(1 + \frac{2\sqrt 2}{3}\right) \approx 11.65685425$, again matching the value in Tab. \ref{tab:see-saw-results} within numerical accuracy. For $d=4$ the maximal value of $12$ is achieved as there are 4 distinguishable states.
\end{proof}

\section{Final thoughts}\label{final-thoughts}

In this article we have continued the work done in \cite{HeKe2019} by introducing an input for Bob in the setting of communication of partial ignorance. We called these new types of communication tasks simply partial ignorance communication tasks of type $T_{n,m}$, where $n$ is an integer representing the length of Charlie's string $s$ and $m$ represents loosely the amount of information on $s$ that Charlie reveals to Alice. Whatever information Charlie doesn't reveal to Alice he will instead reveal to Bob, so that the partial ignorance communication task of type $T_{n,m}$ is always informationally complete, that is, Alice and Bob's combined knowledge is enough to solve the communication task perfectly.

We began by analysing the simplest communication task of type $T_{3,1}$ for bits and qubits. This proved to be a relatively straight-forward task. The tasks $T_{4,1}$ and $T_{4,2}$ were much more complicated. We used various methods to establish inner and outer bounds on success metrics. These methods included SDPs, ultraweak monotones on communication matrices and frame theory for quantum states. We found out that the bit was as good as a communication medium as any noncontextual ontological model for both tasks $T_{4,1}$ and $T_{4,2}$. For $T_{4,1}$ we observed a contextual advantage for qubits and proved tight general bounds on the success metric for qubits and qutrits. For the task $T_{4,2}$ we proved that a contextual advantage was impossible for the operational equivalence we chose and that the qubit didn't perform any better than the bit.

The complexity involved in solving the inner and outer bounds for communication tasks of type $T_{n,m}$ grows rapidly with $n$. Actually already for $T_{4,2}$ we could have chosen another set of operational equivalences for preparations and possibly included operational equivalences for measurements as well. However, we believe that the methods presented in this article serve as a good starting point when solving communication tasks similar to those studied in this work.

There were several questions that ultimately couldn't be included in the scope of this article. The first question is what happens if we abandon the requirement of informational completeness so that Alice and Bob's combined knowledge doesn't determine the correct answer completely. This would certainly increase the complexity involved in solving for the optimal strategies, but we believe the current methods would suffice to analyse those cases as well. The second question involves shared randomness. It's known that shared randomness is a powerful resource for communication \cite{FrWe2015}. Alice and Bob's access to shared randomness would most certainly increase the effectiveness of their best strategies, as this would allow them to mix strategies in a way that would at the very least increase the worst case success probabilities for the bit. However, the effect of shared randomness is already tedious to analyse at the level of communication matrices. We leave this case for future research.

As we discussed earlier, the bit turned out to be as effective as a communication medium as any noncontextual ontological model in the tasks $T_{4,1}$ and $T_{4,2}$. It's not entirely clear to us why this is the case. The bit doesn't respect any operational equivalences as there are only two possible states and a measurement simply consists of receiving the bit that was sent. In this sense the bit makes up a poor comparison for noncontextual models. Nevertheless the average success probabilities are the same. However, it must be kept in mind that the worst case success probability is always zero for the bit. The worst case success probability is greater than zero for qubits in the task $T_{4,2}$, but it's not clear what the actual worst case probability is as the see-saw method is only capable of optimizing the average success chance (linear functions of states and effects). Frame theory doesn't help either because equality in Eq. \eqref{eq:frame-bound} cannot be achieved. We leave the worst case success chance for qubits in $T_{4,2}$ as an open problem.

As a final question we could consider possible connections of the communication tasks studied in this work to other similar communication tasks. We believe that some partial ignorance communication tasks could be mapped to a corresponding random access code. It might even be possible to map all RACs to some subset of partial ignorance communication tasks of particular types. However, this connection is not clear to us at present time and requires further thought.

\section*{Acknowledgement}
O.K. personally thanks Teiko Heinosaari for his support during the research for this article and his insightful comments. O.K. acknowledges financial support from the Turku University Foundation during the research of this article.

\end{document}